\providecommand{\U}[1]{\protect\rule{.1in}{.1in}}
\newtheorem{theorem}{Theorem}
\newtheorem{corollary}{Corollary}
\begin{document}

\title{Selective Cooperative Relaying over Time-Varying Channels}
\author{Diomidis S. Michalopoulos, \IEEEmembership{Student Member,~IEEE,} Athanasios
S. Lioumpas, \IEEEmembership{Student Member,~IEEE,} George K. Karagiannidis,
\IEEEmembership{Senior Member,~IEEE} and Robert Schober,
\IEEEmembership{Senior Member,~IEEE} \thanks{D. S. Michalopoulos, A. S.
Lioumpas, and G. K. Karagiannidis are with the Wireless Communications Systems
Group (WCSG), Electrical and Computer Engineering Department, Aristotle
University of Thessaloniki, 54124 Thessaloniki, Greece (e-mail:\{dmixalo;
alioumpa; geokarag\}@auth.gr).} \thanks{R. Schober is with the Department of
Electrical and Computer Engineering, The University of British Columbia,
Vancouver, BC V6T 1Z4, Canada (e-mail: rschober@ece.ubc.ca)}}
\maketitle

\begin{abstract}
In selective cooperative relaying only a single relay out of the set of
available relays is activated, hence the available power and bandwidth
resources are efficiently utilized. However, implementing selective
cooperative relaying in time-varying channels may cause frequent relay
switchings that deteriorate the overall performance. In this paper, we study
the rate at which a relay switching occurs in selective cooperative relaying
applications in time-varying fading channels. In particular, we derive
closed-form expressions for the relay switching rate (measured in Hz) for
opportunistic relaying (OR) and distributed switch and stay combining (DSSC).
Additionally, expressions for the average relay activation time for both of
the considered schemes are also provided, reflecting the average time that a
selected relay remains active until a switching occurs. Numerical results
manifest that DSSC yields considerably lower relay switching rates than OR,
along with larger average relay activation times, rendering it a better
candidate for implementation of relay selection in fast fading environments.

\end{abstract}

\newpage

\section{Introduction}

Cooperative relaying has been recently proposed as a means of achieving the
beneficial effects of diversity in wireless communications systems, without
employing multiple antennas at neither the receiver nor the transmitter. Its
operation is based upon the concept of employing wireless relaying terminals
that assist the communication between a source and a destination terminal by
receiving the message sent from the source, and then processing it
appropriately and forwarding it to the destination. The relaying terminals may
be either fixed, infrastructure-based terminals placed at selected spots in
urban environments aiming at extending the coverage while avoiding the
infrastructure cost that the deployment of base stations entails, or mobile,
hand-held devices. In any of the above cases, apart from the apparent
robustness against small-scale fading, cooperative relaying offers resilience
against large attenuations due to path-loss, as well as shadowing. This,
together with the advantages that cooperative relaying offers in the various
levels of the open system interconnection (OSI) protocol stack, renders the
cooperative concept a strong candidate for utilization in future wireless
networks \cite{B:Fitzek_Coop}.

The most common cooperative relaying protocols were introduced in
\cite{J:Laneman}, where the term "cooperative diversity" was used so as to
emphasize the diversity advantages of relay employment. In the same work, an
outage analysis of the cooperative diversity concept was also conducted,
showing remarkable performance benefits as compared to the case without
relaying. Nonetheless, the analysis in \cite{J:Laneman} concerns the scenario
where a single relay is available for cooperation. In cases where multiple
relaying terminals are utilized, the error performance can be dramatically
improved if the multiple relay transmissions occur in orthogonal channels and
are combined by a maximal ratio combiner (MRC) at the destination. However,
orthogonal channel utilization results in a reduced overall spectral
efficiency. To this end, activating only a single relay out of the set of
available relays has been shown to be an effective means of achieving
cooperative diversity while limiting the negative effects of orthogonal relay
transmissions, offering thus a good tradeoff between error performance and
spectral efficiency.

Previous works on single relay selection in cooperative relaying scenarios
include \cite{C:Adve_Impr_AF}-\nocite{J:Bletsas_Cooper}\nocite{J:Rel_Sel_PA}%
\cite{C:Tan_Sp_Eff}, where the relay selection was based on a maximum
signal-to-noise-ratio (SNR) policy, thus attaining a diversity order equal to
the number of available relays, i.e., the same diversity order as for the case
where all the relays are activated, yet with considerably higher spectral
efficiency. Due to the somewhat opportunistic usage of the available
resources, the relay selection protocol based upon the maximum SNR rule is
termed opportunistic relaying (OR)\nocite{J:Bletsas_Cooper}. A simpler
alternative to OR, for the case of two available relaying terminals, is the
so-called distributed switch and stay combining (DSSC) protocol proposed in
\cite{J:DSSC2}. According to DSSC, a single relay remains active for as long
as the corresponding SNR is greater than a predetermined threshold value;
should this condition be violated, a relay switching occurs. The DSSC protocol
hence requires only a single end-to-end channel estimation for each
transmission, reducing thus the overall complexity while achieving the same
outage performance as OR, albeit inferior error performance \cite{J:DSSC2}. In
the sequel, we use the term "selective cooperative relaying" to refer to the
OR and DSSC protocols, i.e., to protocols where single relay selection takes place.

\subsection{Motivation}

Despite the above benefits of selective cooperative relaying, however, a major
issue that needs to be addressed is the rate at which a switching of the
active relaying terminal occurs in practical scenarios where the fading in
each of the links involved is time-varying. In fact, this rate reflects the
number of times per second the system has to switch from one relaying terminal
to another, and corresponds to a complexity measure regarding the
implementation of selective cooperative relaying in practice. More
specifically, frequent relay switchings may cause synchronization problems due
to the fact that the system needs to repeat the initialization process each
time the active relay changes, in order to re-adapt to the channel conditions
of the new branch. Apparently, such synchronization readjustment leads to
increased implementation complexity, as well as potential delays and outages
which may cause severe information loss with ultimate deteriorating effects on performance.

Particularly for the case of DSSC, where only one end-to-end branch is
estimated in each transmission period, relay switchings have a negative impact
on channel estimation along with synchronization. This is because each time a
relay switching occurs, a previously idle relay needs to be \textquotedblleft
awakened\textquotedblright.\ Hence, a new training sequence needs to be
initiated, which may not be long enough to provide accurate channel
estimation, resulting in detection errors in addition to those owing to weak
channel conditions. To the best of the authors' knowledge, the concept of
relay switchings in time-varying fading environments has not been addressed in
the literature.

\subsection{Contribution}

In this paper, we study the effect of the time-varying nature of fading
channels in selective cooperative relaying applications. In particular, we
provide closed-form expressions for the relay switching rate (measured in Hz)
of OR and DSSC, as a function of the average channel gains and the maximum
Doppler frequency of each of the source-relay and relay-destination links
involved. These expressions consider the case of independent but not
necessarily identically distributed (i.n.i.d.) Rayleigh fading channels, and
account for both AF and decode-and-forward (DF) relaying. Particularly for OR,
we derive the relay switching rate for arbitrary numbers of participating
relaying terminals when operating over independent and identically distributed
(i.i.d.) Rayleigh fading channels, showing that this rate is an increasing
function in the number of relays. In addition to the switching rate, we obtain
closed-form expressions for the average relay activation time, which is
defined as the average time interval that the selected relay remains activated
until the system switches to another relay. A set of numerical examples is
provided showing that DSSC results in a considerably lower relay switching
rate, as well as a larger average relay activation time than OR. These results
indicate that DSSC may be preferable in fast fading scenarios where the
channel gains change rapidly making OR difficult to implement due to the
frequent relay switchings, despite DSSC's inferiority in terms of error
performance \cite{J:DSSC2}.

\subsection{Outline}

The rest of this paper is organized as follows. The mode of operation of
OR\ and DSSC, together with some basic definitions are given in Section
\ref{BKGD}. Section \ref{OR} provides expressions for the relay switching rate
and the average relay activation time of OR, when operating over Rayleigh
fading channels, while the corresponding expressions for DSSC are presented in
Section \ref{DSSC}. Some numerical examples are given in Section \ref{NUM}.
Finally, Section \ref{CON} concludes the paper.

\section{\label{BKGD}Background}

We consider the cooperative relaying setup where a source terminal, $S$,
communicates with a destination terminal, $D$, with the aid of $L$ relaying
terminals which are denoted here by $R_{i}$, $i\in\left\{  1,...,L\right\}  $.
The relaying terminals may operate in either the DF or the AF mode. In the
former case, the relays fully decode the received signal and forward a
noise-free symbol to the destination, while in the latter case the relaying
terminals are used as simple analog repeaters which amplify the received
signal and forward it to the destination without demodulating it.
Additionally, the relays are assumed to be half-duplex, in the sense that they
cannot receive and transmit simultaneously; instead, the source-relay and
relay-destination transmissions are assumed to occur in time-orthogonal channels.

In this paper, we adopt the general notation, $a_{AB}$, to denote the channel
gain of the link between terminals $A$ and $B$, so that the channel gain of,
e.g., the $S$-$R_{i}$ link, is represented by $a_{SR_{i}}$. The fading in each
of the links involved is assumed to be Rayleigh distributed, with probability
density function (PDF) given by%
\begin{equation}
f_{a_{AB}}\left(  x\right)  =\frac{2x}{\Omega_{AB}}\exp\left(  -\frac{x^{2}%
}{\Omega_{AB}}\right)  \label{Rayl}%
\end{equation}
where $\Omega_{AB}$ represents the average squared channel gain of the $A$-$B$
link, e.g., $\Omega_{SR_{i}}=E\left[  a_{SR_{i}}^{2}\right]  $ with $E\left[
\cdot\right]  $ denoting expectation. We denote the maximum Doppler frequency
of the $A$-$B$ link by $\mathcal{F}_{AB}$, e.g., the maximum Doppler frequency
of the $S$-$R_{i}$ link is denoted by $\mathcal{F}_{SR_{i}}$. Moreover, all
terminals are assumed to transmit with identical power, denoted by $P_{T}$,
while the noise power in all of the links involved is identical and denoted by
$N_{0}$.

Throughout this work, two selective cooperative relaying protocols are
considered: The opportunistic relaying (OR) protocol presented in
\cite{J:Bletsas_Cooper}, and a variant of the two-relay distributed switch and
stay combining (DSSC) protocol proposed in \cite{J:DSSC2}, which is referred
to as DSSC-B here; this notation is adopted since DSSC-B varies from DSSC in
exactly the same way as SSC-B varies from SSC-A in \cite{J:Alouini_switch}.
The reasoning behind studying DSSC-B instead of the original DSSC protocol
presented in \cite{J:DSSC2} lies in the fact that DSSC-B yields less frequent
relay switchings than DSSC, similarly as SSC-B yields less frequent switchings
than SSC-A \cite{J:Alouini_switch}. The modes of operation of OR and DSSC-B
are given in detail in the ensuing subsection. It is worth mentioning that in
each case, only a single relay out of the set of available relays is activated
and denoted by $R_{b}$, resulting in a somewhat distributed version of
selection combining for OR; and a distributed version of SSC-B for DSSC-B. The
selection is performed at the destination terminal, which collects the channel
state information (CSI) of all the links involved. Then, after determining the
\textquotedblleft best\textquotedblright\ relay, the destination sends a
feedback message to the relays indicating the activation of the selected relay
and the deactivation of the previously selected relay. All other terminals
remain inactive until they receive a proper activation message from the
destination.

\subsection{Mode of Operation of Schemes Under Consideration}

\subsubsection{Opportunistic Relaying (OR)}

The OR protocol consists of selecting a single relay out of the set of $L$
available relays, particularly the relay with the highest of some
appropriately defined metric, which accounts for both the $S$-$R_{i}$ and
$R_{i}$-$D$ links and corresponds to performance measures of the $i$th
end-to-end path. Under the assumption of equal power transmitted by the
potential relaying terminals, such metrics may be the $\min$ equivalent
defined as%
\begin{equation}
a_{i}=\min\left(  a_{SR_{i}},a_{R_{i}D}\right)  \label{min1}%
\end{equation}
or the \textquotedblleft half harmonic mean\textquotedblright\ equivalent
defined as%
\begin{equation}
a_{i}=\frac{a_{SR_{i}}a_{R_{i}D}}{a_{SR_{i}}+a_{R_{i}D}}. \label{hm1}%
\end{equation}
The $\min$ equivalent metric accounts for determining the end-to-end path
based on the weakest intermediate link. It is appropriate for DF relaying
because it corresponds to an outage-equivalent of the end-to-end DF channel,
since the outage probability of DF relaying equals the cumulative distribution
function (CDF) of $a_{i}$ evaluated at the outage threshold SNR. The half
harmonic mean equivalent is appropriate for AF relaying since it corresponds
to a tight approximation of the end-to-end SNR of the $S$-$R_{i}$-$D$ link
\cite{J:Anghel_Kaveh}. It should be noted, however, that both criteria lead to
approximately the same results, since the $\min$ equivalent represents a tight
upper upper bound of the half harmonic mean equivalent, particularly when the
SNRs of the source-relay and relay-destination links are very different, e.g.,
$a_{SR_{i}}\gg a_{R_{i}D}$ \cite{J:Ikki_AF_Nak}, \cite{B:Mean_Ineq}. For this
reason, in the sequel we adopt the $\max$-$\min$ criterion for determining the
selected relay, $R_{b}$, so that the relay associated with the maximum
"bottleneck" of the source-relay and relay-destination links is selected,
i.e.,
\begin{equation}
b=\underset{i\in\left\{  1,...,L\right\}  }{\arg}\max\min\left(  a_{SR_{i}%
},a_{R_{i}D}\right)  . \label{b1}%
\end{equation}

\subsubsection{DSSC-B}

The DSSC protocol \cite{J:DSSC2} applies to the case where there are two
relays available for cooperation. Its simplicity over OR lies in the fact that
in each training period, only a single end-to-end channel has to be estimated.
This estimation is used so as to check whether the active branch is of
sufficient quality. The system thus switches from one relay to another only if
the equivalent SNR of the active branch lies below a predefined switching
threshold, $T$. Using the $\min$ equivalent metric defined above, a relay
switching occurs when $a_{b}<\sqrt{T/\Gamma}$, where $\Gamma=P_{T}/N_{0}$
denotes the ratio of the power transmitted by the source and each relay
divided by the noise power, i.e., the common SNR without fading.

Nevertheless, there may exist transmission periods where both the available
end-to-end channels are not strong enough; in such cases the system switches
continuously from one relay to another, though without reaching the desired
SNR level. Therefore, in order to avoid these excessive, as well as unavailing
switchings, throughout this paper we study a variant of the DSSC protocol
denoted here by DSSC-B. According to DSSC-B, a switching occurs whenever the
SNR of the active branch down-crosses the switching threshold, $T$; the system
then stays connected with the new branch, regardless of whether the SNR of the
new branch is greater or lower than $T$, until this SNR down-crosses $T$.
Mathematically speaking, denoting by $a_{i}^{j}$ the $\min$ equivalent of the
$i$th branch for a transmission period $j$, the active relay, $R_{b}^{j+1}$,
for the ensuing transmission period is determined by%
\begin{equation}%
\begin{array}
[c]{c}%
\text{If }R_{b}^{j}=R_{1}\text{, }R_{b}^{j+1}=~\left\{
\begin{array}
[c]{c}%
R_{1}\text{ if }\left[  a_{1}^{j-1}<\sqrt{T/\Gamma}\text{ or }a_{1}^{j}%
>\sqrt{T/\Gamma}\right] \\
~~R_{2}\text{ if }\left[  a_{1}^{j-1}>\sqrt{T/\Gamma}\text{ and }a_{1}%
^{j}<\sqrt{T/\Gamma}\right]
\end{array}
\right. \\
\text{If }R_{b}^{j}=R_{2}\text{, }R_{b}^{j+1}=\text{ }\left\{
\begin{array}
[c]{c}%
R_{2}\text{ if }\left[  a_{2}^{j-1}<\sqrt{T/\Gamma}\text{ or }a_{2}^{j}%
>\sqrt{T/\Gamma}\right] \\
~~R_{1}\text{ if }\left[  a_{2}^{j-1}>\sqrt{T/\Gamma}\text{ and }a_{2}%
^{j}<\sqrt{T/\Gamma}\right]
\end{array}
\right.
\end{array}
. \label{b2}%
\end{equation}

It is interesting to note that, under the $\min$ equivalent criterion, the
end-to-end path between source and destination in both OR and DSSC-B is
treated as a \textit{virtual} channel with channel gain $a_{i}$ $=\min\left(
a_{SR_{i}},a_{R_{i}D}\right)  $. The OR scheme can be thus regarded as a
virtual selection diversity scheme, where the instantaneous channel gain of
the $i$th input branch is $a_{i}$; the DSSC-B scheme can be interpreted as a
virtual SSC-B scheme with channel gains $a_{1}$ and $a_{2}$.

\subsection{Basic Definitions}

The rest of the paper focuses on deriving expressions, as well as providing
numerical examples, for the following performance measures for both OR and DSSC-B:

\begin{itemize}
\item The \textit{relay switching rate}, defined as the number of times per
second that a switching of the active relay takes place; that is, the
destination stops receiving from a certain relaying terminal and connects with
another one.

\item The \textit{average relaying activation time}, defined as the average
time duration that the selected relay remains activated, starting from the
time it receives an activation message until the system switches to another
relaying terminal.
\end{itemize}

\section{\label{OR}Relay Switching Rates and Average Relay Activation Time of
OR}

\subsection{Two Relays, i.n.i.d. Fading}

Let us first consider the two-relay OR scenario, where the fading in all the
intermediate links involved is i.n.i.d.

\begin{theorem}
The relay switching rate of OR with two available relays and i.n.i.d. fading
is given by%
\begin{align}
SR_{OR}  &  =\frac{\pi\sqrt{2\Omega_{1}\Omega_{2}}\left[  \Omega_{SR_{1}%
}\Omega_{SR_{2}}\sqrt{\Omega_{R_{1}D}\mathcal{F}_{R_{1}D}^{2}+\Omega_{R_{2}%
D}\mathcal{F}_{R_{2}D}^{2}}+\Omega_{R_{1}D}\Omega_{SR_{2}}\sqrt{\Omega
_{SR_{1}}\mathcal{F}_{SR_{1}}^{2}+\Omega_{R_{2}D}\mathcal{F}_{R_{2}D}^{2}%
}\right]  }{\left(  \Omega_{1}+\Omega_{2}\right)  ^{3/2}\left(  \Omega
_{SR_{1}}+\Omega_{R_{1}D}\right)  \left(  \Omega_{SR_{2}}+\Omega_{R_{2}%
D}\right)  }\label{swr3}\\
&  +\frac{\pi\sqrt{2\Omega_{1}\Omega_{2}}\left[  \Omega_{R_{2}D}\Omega
_{SR_{1}}\sqrt{\Omega_{SR_{2}}\mathcal{F}_{SR_{2}}^{2}+\Omega_{R_{1}%
D}\mathcal{F}_{R_{1}D}^{2}}+\Omega_{R_{1}D}\Omega_{R_{2}D}\sqrt{\Omega
_{SR_{1}}\mathcal{F}_{SR_{1}}^{2}+\Omega_{SR_{2}}\mathcal{F}_{SR_{2}}^{2}%
}\right]  }{\left(  \Omega_{1}+\Omega_{2}\right)  ^{3/2}\left(  \Omega
_{SR_{1}}+\Omega_{R_{1}D}\right)  \left(  \Omega_{SR_{2}}+\Omega_{R_{2}%
D}\right)  }\nonumber
\end{align}
where $\Omega_{i}=E\left[  a_{i}^{2}\right]  =\Omega_{SR_{i}}\Omega_{R_{i}%
D}/\left(  \Omega_{SR_{i}}+\Omega_{R_{i}D}\right)  $ represents the average
squared value of $a_{i}$, $i\in\left\{  1,2\right\}  $.
\end{theorem}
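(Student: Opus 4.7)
The plan is to identify each switching of the active relay with a zero-crossing of the scalar process $Y(t):=a_1(t)-a_2(t)$, and then to evaluate the associated level-crossing rate by Rice's formula.

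Under the $\max$--$\min$ rule (\ref{b1}), the index $b$ changes value exactly when the two virtual gains $a_1(t)$ and $a_2(t)$ cross. Since relay~1 and relay~2 involve disjoint fading processes, $(a_1,\dot a_1)$ and $(a_2,\dot a_2)$ are independent, and Rice's formula, after integrating out $\dot a_2$, reduces to
\[
SR_{OR}=\int_0^{\infty} f_{a_1}(x)\,f_{a_2}(x)\,E\!\left[|\dot a_1-\dot a_2|\mid a_1=a_2=x\right]dx.
\]
From $P(a_i>x)=\exp(-x^2/\Omega_{SR_i})\exp(-x^2/\Omega_{R_iD})$, the branch $a_i$ is itself Rayleigh with mean square $\Omega_i$ as stated. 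A standard Gaussian integral then yields $\int_0^{\infty}f_{a_1}(x)f_{a_2}(x)\,dx=\sqrt{\pi\Omega_1\Omega_2}/(\Omega_1+\Omega_2)^{3/2}$, which, combined with the $\sqrt{2/\pi}$ coming from $E|Z|=\sqrt{2/\pi}\,\sigma$ for $Z\sim\mathcal{N}(0,\sigma^2)$, produces the overall $\pi\sqrt{2\Omega_1\Omega_2}/(\Omega_1+\Omega_2)^{3/2}$ prefactor appearing in (\ref{swr3}).

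The crux is the conditional expectation $E[|\dot a_1-\dot a_2|\mid a_1=a_2=x]$. Off the measure-zero set $\{a_{SR_i}=a_{R_iD}\}$, $\dot a_i$ equals $\dot a_{SR_i}$ or $\dot a_{R_iD}$ according to which of the two links is the weaker; a direct computation gives
\[
p_i:=P\!\left(a_{SR_i}<a_{R_iD}\mid a_i=x\right)=\frac{\Omega_{R_iD}}{\Omega_{SR_i}+\Omega_{R_iD}},
\]
\emph{independent of} $x$, with $q_i:=1-p_i=\Omega_{SR_i}/(\Omega_{SR_i}+\Omega_{R_iD})$. By the standard Clarke/Jakes result, $\dot a_{AB}$ is zero-mean Gaussian with variance $\pi^2\mathcal{F}_{AB}^2\Omega_{AB}$ and independent of $a_{AB}$. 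Conditioning on the four possibilities for which link attains the minimum in each relay, $\dot a_1-\dot a_2$ becomes a difference of two independent zero-mean Gaussians whose combined variance depends only on the case, so that each case contributes $\sqrt{2\pi(\mathcal{F}_{A_1B_1}^2\Omega_{A_1B_1}+\mathcal{F}_{A_2B_2}^2\Omega_{A_2B_2})}$ for the appropriate link labels.

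Finally, summing the four case contributions weighted by $q_1q_2,\,p_1q_2,\,q_1p_2,\,p_1p_2$---all constants in $x$, so they factor out of the integral---and substituting the explicit expressions for $p_i$ and $q_i$ places $(\Omega_{SR_1}+\Omega_{R_1D})(\Omega_{SR_2}+\Omega_{R_2D})$ in the denominator and produces the four numerator terms of (\ref{swr3}) in the order written. The principal obstacle I anticipate is the third-paragraph step: justifying that, despite the non-smoothness of $\min(\cdot,\cdot)$ on $\{a_{SR_i}=a_{R_iD}\}$, the conditional density of $\dot a_i$ given $a_i=x$ is genuinely a Gaussian mixture with $x$-independent weights and components independent of $a_i$, so that Rice's formula applies cleanly and the conditional expectation can be pulled through the $x$-integral as a constant.
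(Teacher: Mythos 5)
Your proposal is correct and follows essentially the same route as the paper: a Rice zero-crossing argument for $Z(t)=a_1(t)-a_2(t)$, with the switching rate factoring into $f_Z(0)=\int_0^{\infty}f_{a_1}(x)f_{a_2}(x)\,dx$ times the mean absolute derivative of $Z$, the latter coming from the same four-component Gaussian-mixture structure (the paper obtains it by convolving the mixture PDFs of $\dot{a}_1$ and $\dot{a}_2$ in (\ref{fa2})--(\ref{fzb2}) and integrating via (\ref{I2}), which is just your case-by-case use of $E|N(0,\sigma^{2})|=\sigma\sqrt{2/\pi}$). The obstacle you flag at the end is not a real gap: for Rayleigh links the conditional weight $\Pr\{a_{SR_i}<a_{R_iD}\mid a_i=x\}=\Omega_{R_iD}/(\Omega_{SR_i}+\Omega_{R_iD})$ is indeed independent of $x$ and the link derivatives are independent of the amplitudes, which is exactly the independence of $Z(t)$ and $\dot{Z}(t)$ that the paper invokes (with a citation) to factor the joint PDF in (\ref{swr1}).
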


\begin{proof}
Let us consider the random processes%
\begin{equation}
a_{1}\left(  t\right)  =\min\left(  a_{SR_{1}}\left(  t\right)  ,a_{R_{1}%
D}\left(  t\right)  \right)  \label{a1}%
\end{equation}%
\begin{equation}
a_{2}\left(  t\right)  =\min\left(  a_{SR_{2}}\left(  t\right)  ,a_{R_{2}%
D}\left(  t\right)  \right)  \label{a2}%
\end{equation}
which correspond to the fading processes of the virtual end-to-end channels
$S$-$R_{1}$-$D$ and $S$-$R_{2}$-$D$, respectively. Additionally, let us define
the random process $Z\left(  t\right)  $ as%
\begin{equation}
Z\left(  t\right)  =a_{1}\left(  t\right)  -a_{2}\left(  t\right)  \label{z1}%
\end{equation}
so that the active relay in each time instance $t$ is determined by the signum
of $Z\left(  t\right)  $ in this time instance, i.e., $R_{1}$ is active at
time $t$ if $Z\left(  t\right)  >0$; $R_{2}$ is active at time $t$ if
$Z\left(  t\right)  <0$. Consequently, in order to derive the average relay
switching rate it suffices to evaluate the average number of times the process
$Z\left(  t\right)  $ crosses zero. This is equivalent to obtaining the level
crossing rate (LCR) of $Z\left(  t\right)  $, evaluated at zero. Then, the
positive-going LCR of $Z\left(  t\right)  $ corresponds to the average number
of times the system switches from $R_{2}$ to $R_{1}$, while the negative-going
LCR of $Z\left(  t\right)  $ accounts for the average number of times the
system switches from $R_{1}$ to $R_{2}$.

The relay switching rate equals the sum of positive-going and negative-going
zero-crossing rates of $Z\left(  t\right)  $, which can be expressed as
\cite{J:Rice_sine}%
\begin{equation}
SR_{OR}=\int_{-\infty}^{0}\left\vert \overset{\cdot}{z}\right\vert f\left(
0,\overset{\cdot}{z}\right)  d\overset{\cdot}{z}+\int_{0}^{\infty}%
\overset{\cdot}{z}f\left(  0,\overset{\cdot}{z}\right)  d\overset{\cdot}{z}
\label{swr1}%
\end{equation}
where $f\left(  z,\overset{\cdot}{z}\right)  $ denotes the joint PDF of
$Z\left(  t\right)  $ and the time-derivative of $Z\left(  t\right)  $,
$\overset{\cdot}{Z}\left(  t\right)  $. Because of the independence of the
fading process and its time derivative, in each of the intermediate links
involved, owing to the Rayleigh fading assumption, the processes $Z\left(
t\right)  $ and $\overset{\cdot}{Z}\left(  t\right)  $ are independent
\cite{J:Beaulieu_SWR}. Therefore, $f\left(  z,\overset{\cdot}{z}\right)  $ can
be expressed as $f\left(  z,\overset{\cdot}{z}\right)  =f_{Z}\left(  z\right)
f_{\overset{\cdot}{Z}}\left(  \overset{\cdot}{z}\right)  $, with $f_{Z}\left(
\cdot\right)  $ and $f_{\overset{\cdot}{Z}}\left(  \overset{\cdot}{z}\right)
$ denoting the PDFs of $Z\left(  t\right)  $ and $\overset{\cdot}{Z}\left(
t\right)  $, respectively. Equation (\ref{swr1}) thus yields
\begin{subequations}
\label{swr2}%
\begin{align}
SR_{OR}  &  =f_{Z}\left(  0\right)  \left[  \int_{-\infty}^{0}\left\vert
\overset{\cdot}{z}\right\vert f_{\overset{\cdot}{Z}}\left(  \overset{\cdot}%
{z}\right)  d\overset{\cdot}{z}+\int_{0}^{\infty}\overset{\cdot}{z}%
f_{\overset{\cdot}{Z}}\left(  \overset{\cdot}{z}\right)  d\overset{\cdot}%
{z}\right] \label{swr2a}\\
&  =2f_{Z}\left(  0\right)  \int_{0}^{\infty}\overset{\cdot}{z}f_{\overset
{\cdot}{Z}}\left(  \overset{\cdot}{z}\right)  d\overset{\cdot}{z}
\label{swr2b}%
\end{align}
where we used the fact that the two integrals in (\ref{swr2a}) are equal to
each other since, apparently, the number of times the system switches from
$R_{1}$ to $R_{2}$ equals that of switching from $R_{2}$ to $R_{1}$, in the
long run. The PDF of $Z\left(  t\right)  $ evaluated at the origin is derived
as (see Appendix A)%
\end{subequations}
\begin{equation}
f_{Z}\left(  0\right)  =\frac{\sqrt{\pi}\sqrt{\Omega_{1}\Omega_{2}}}{\left(
\Omega_{1}+\Omega_{2}\right)  ^{3/2}}. \label{fz1}%
\end{equation}
Moreover, the second term in (\ref{swr2}) is derived as (see Appendix B)
\begin{align}
\int_{0}^{\infty}\overset{\cdot}{z}f_{\overset{\cdot}{Z}}\left(
\overset{\cdot}{z}\right)  d\overset{\cdot}{z}  &  =\frac{\sqrt{\pi}\left[
\Omega_{SR_{1}}\Omega_{SR_{2}}\sqrt{\Omega_{R_{1}D}\mathcal{F}_{R_{1}D}%
^{2}+\Omega_{R_{2}D}\mathcal{F}_{R_{2}D}^{2}}+\Omega_{R_{1}D}\Omega_{SR_{2}%
}\sqrt{\Omega_{SR_{1}}\mathcal{F}_{SR_{1}}^{2}+\Omega_{R_{2}D}\mathcal{F}%
_{R_{2}D}^{2}}\right]  }{\sqrt{2}\left(  \Omega_{SR_{1}}+\Omega_{R_{1}%
D}\right)  \left(  \Omega_{SR_{2}}+\Omega_{R_{2}D}\right)  }\nonumber\\
&  +\frac{\sqrt{\pi}\left[  \Omega_{R_{2}D}\Omega_{SR_{1}}\sqrt{\Omega
_{SR_{2}}\mathcal{F}_{SR_{2}}^{2}+\Omega_{R_{1}D}\mathcal{F}_{R_{1}D}^{2}%
}+\Omega_{R_{1}D}\Omega_{R_{2}D}\sqrt{\Omega_{SR_{1}}\mathcal{F}_{SR_{1}}%
^{2}+\Omega_{SR_{2}}\mathcal{F}_{SR_{2}}^{2}}\right]  }{\sqrt{2}\left(
\Omega_{SR_{1}}+\Omega_{R_{1}D}\right)  \left(  \Omega_{SR_{2}}+\Omega
_{R_{2}D}\right)  }. \label{I2}%
\end{align}
The relay switching rate of OR for the case of two relays with i.n.i.d. fading
channels is obtained by substituting (\ref{fz1}) and (\ref{I2}) into
(\ref{swr2b}), completing the proof.
\end{proof}

\begin{corollary}
\label{AATC1}The average relay activation time for OR with two available
relays and i.n.i.d. fading is given by%
\begin{equation}
AT_{i,OR}=\frac{2\Omega_{i}}{SR_{OR}\left(  \Omega_{1}+\Omega_{2}\right)
}\text{, \ }i\in\left\{  1,2\right\}  \label{ARAT1}%
\end{equation}
where $SR_{OR}$ is given in (\ref{swr3}).
\end{corollary}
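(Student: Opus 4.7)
The plan is to exploit a simple renewal-type argument that relates three quantities in the long run: (i) the fraction of time that $R_i$ is the active relay, (ii) the rate at which $R_i$ becomes activated, and (iii) the average activation time of $R_i$. Over an observation window of length $T_{obs}$, the total time during which $R_i$ is active equals (in the almost-sure limit) the steady-state probability $p_i = \Pr\{a_i > a_j\}$ multiplied by $T_{obs}$, while the number of times $R_i$ is activated equals the number of positive-going zero crossings of $(-1)^{i+1}Z(t)$, which in the two-relay case is exactly half of the total switching count $SR_{OR}\cdot T_{obs}$. Dividing total active time by number of activations and letting $T_{obs}\to\infty$ gives
\begin{equation*}
AT_{i,OR} \;=\; \frac{2\,p_i}{SR_{OR}}.
\end{equation*}

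Next I would compute $p_i$ in closed form. Because $a_{SR_i}$ and $a_{R_iD}$ are independent Rayleigh, the squared bottleneck $a_i^2 = \min(a_{SR_i}^2, a_{R_iD}^2)$ is exponential with mean $\Omega_i = \Omega_{SR_i}\Omega_{R_iD}/(\Omega_{SR_i}+\Omega_{R_iD})$, as already used in the statement of the theorem. Since $a_1^2$ and $a_2^2$ are independent exponentials,
\begin{equation*}
p_i \;=\; \Pr\{a_i^2 > a_j^2\} \;=\; \frac{1/\Omega_j}{1/\Omega_i + 1/\Omega_j} \;=\; \frac{\Omega_i}{\Omega_1+\Omega_2}, \qquad j \neq i.
\end{equation*}
Substituting this into the renewal identity immediately yields $AT_{i,OR} = 2\Omega_i / \left[SR_{OR}(\Omega_1+\Omega_2)\right]$, which is the claimed expression.

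The only point that needs a moment of care is the step identifying the activation rate of $R_i$ with $SR_{OR}/2$. With exactly two relays the switching sequence strictly alternates, so in any window the number of activations of $R_i$ differs from half the number of switching events by at most one; this boundary effect vanishes in the $T_{obs}\to\infty$ limit. The rest of the argument is essentially bookkeeping, so I do not anticipate a hard technical obstacle; the main conceptual work has already been done in the theorem via the LCR computation, and the corollary simply converts that rate into a mean inter-switching time weighted by the steady-state occupancy of each relay.
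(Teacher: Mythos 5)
Your proposal is correct and follows essentially the same route as the paper: the renewal identity $AT_{i,OR}=2\rho_i^{OR}/SR_{OR}$ with the activation rate of $R_i$ taken as half the total switching rate, combined with $\rho_i^{OR}=\Pr\{a_i>a_j\}=\Omega_i/(\Omega_1+\Omega_2)$ (which you obtain via the exponential distribution of $a_i^2$, equivalent to the paper's use of the Rayleigh crossing probabilities). Your explicit justification of the factor $2$ via the strictly alternating switching sequence is a welcome clarification of a step the paper states without elaboration.
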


\begin{proof}
Let us denote by $\rho_{i}^{OR}$ the steady-state probability of selecting
relay $R_{i}$ in the OR setup, i.e., $\rho_{1}^{OR}=\Pr\left\{  a_{1}%
>a_{2}\right\}  $; $\rho_{2}^{OR}=\Pr\left\{  a_{1}<a_{2}\right\}  $. Then,
given that the average switching rate from $R_{1}$ to $R_{2}$ equals the
average switching rate from $R_{2}$ to $R_{1}$, the average relay activation
time is derived as
\begin{equation}
AT_{i,OR}=\frac{2\rho_{i}^{OR}}{SR_{OR}}. \label{ARAT2}%
\end{equation}
Using the fact that $\rho_{i}^{OR}=\Omega_{i}/\left(  \Omega_{1}+\Omega
_{2}\right)  $ (see eqs. (\ref{ss1}), (\ref{ss2})), the proof is complete.
\end{proof}

\subsection{Two Relays, i.i.d. fading}

\begin{corollary}
The relay switching rate of OR with two available relays and i.i.d. fading is
given by%
\begin{equation}
SR_{OR}^{iid}=\frac{\pi\left[  \sqrt{\mathcal{F}_{SR_{1}}^{2}+\mathcal{F}%
_{SR_{2}}^{2}}+\sqrt{\mathcal{F}_{SR_{1}}^{2}+\mathcal{F}_{R_{2}D}^{2}}%
+\sqrt{\mathcal{F}_{R_{1}D}^{2}+\mathcal{F}_{SR_{2}}^{2}}+\sqrt{\mathcal{F}%
_{R_{1}D}^{2}+\mathcal{F}_{R_{2}D}^{2}}\right]  }{4\sqrt{2}}. \label{swr4}%
\end{equation}
Particularly for the case where the maximum Doppler frequencies are also
identical (i.e., $\mathcal{F}_{SR_{i}}=\mathcal{F}_{R_{i}D}=\mathcal{F}$,
$i\in\left\{  1,2\right\}  $), the relay switching rate is given by%
\begin{equation}
SR_{OR}^{iid}=\pi\mathcal{F}\text{.} \label{swr5}%
\end{equation}

\end{corollary}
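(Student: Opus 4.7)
The plan is to derive the corollary as a direct specialization of Theorem~1 rather than redoing the level-crossing argument from scratch. Under i.i.d. Rayleigh fading all four average squared channel gains coincide, so I set $\Omega_{SR_1}=\Omega_{R_1D}=\Omega_{SR_2}=\Omega_{R_2D}\equiv\Omega$, which by the definition of the virtual channel gain gives $\Omega_1=\Omega_2=\Omega/2$. I will then substitute these equalities into (\ref{swr3}) and simplify.

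First I would simplify the common prefactor: $\sqrt{2\Omega_1\Omega_2}=\Omega/\sqrt{2}$, $(\Omega_1+\Omega_2)^{3/2}=\Omega^{3/2}$, and each parenthesized sum $\Omega_{SR_i}+\Omega_{R_iD}$ equals $2\Omega$, so the denominator of (\ref{swr3}) reduces to $4\Omega^{7/2}$. Next, inside each of the four bracketed terms in the numerator, the product of the two $\Omega$-factors outside the square root equals $\Omega^2$, while the sum under the square root has the form $\Omega(\mathcal{F}_{AB}^2+\mathcal{F}_{CD}^2)$, so each bracketed term simplifies to $\Omega^{5/2}\sqrt{\mathcal{F}_{AB}^2+\mathcal{F}_{CD}^2}$ for the appropriate pair of Doppler frequencies. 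Multiplying by the prefactor $\pi\Omega/\sqrt{2}$ and dividing by $4\Omega^{7/2}$ leaves a clean factor of $\pi/(4\sqrt{2})$ in front of the four square roots, which is exactly (\ref{swr4}).

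The second claim (\ref{swr5}) then follows by setting every $\mathcal{F}_{AB}=\mathcal{F}$ in (\ref{swr4}); each of the four square roots collapses to $\sqrt{2}\,\mathcal{F}$, their sum equals $4\sqrt{2}\,\mathcal{F}$, and the $4\sqrt{2}$ in the denominator cancels to give $\pi\mathcal{F}$.

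There is no real obstacle here; the entire argument is bookkeeping on (\ref{swr3}). The only thing to be careful about is tracking which pair of Doppler frequencies appears in each of the four summands, so that the final four radicals are identified correctly and in the order stated by the corollary. Once that pairing is kept straight, the exponents of $\Omega$ cancel cleanly and no further manipulation is required.
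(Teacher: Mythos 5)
Your proposal is correct and follows exactly the route the paper takes: the paper's proof is simply ``follows directly from (\ref{swr3}) after simple algebraic manipulations,'' and your substitution $\Omega_{SR_i}=\Omega_{R_iD}=\Omega$, $\Omega_1=\Omega_2=\Omega/2$ with the resulting cancellation of the $\Omega$ powers (leaving $\pi/(4\sqrt{2})$ times the four radicals, and $\pi\mathcal{F}$ when all Doppler frequencies coincide) is precisely that manipulation, carried out correctly with the right pairing of Doppler frequencies in each radical.
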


\begin{proof}
The proof follows directly from (\ref{swr3}) after simple algebraic manipulations.
\end{proof}

One may note that the relay switching rate in the i.i.d. case is independent
of the channel amplitude and is determined only by the maximum Doppler
frequency in each of the links involved. Moreover, it is interesting to note
that (\ref{swr5}) yields a relay switching rate which is identical to that of
conventional selection diversity (i.e., where no relaying takes place) with
identical Rayleigh fading, as given in \cite{J:Beaulieu_SWR}, except for a
factor of $\sqrt{2}$. This implies that the OR setup can be considered as a
distributed selection diversity scheme with two \textit{virtual} Rayleigh
channels, where the maximum Doppler frequency of each virtual channel equals
$\sqrt{2}$-times the maximum Doppler frequency of the intermediate links. This
is due to the fact that in our case we deal with four time-varying links,
whereas in conventional selection diversity there are only two time-varying
links involved.

\begin{corollary}
The average relay activation time of two-relay OR for the i.i.d. scenario with
identical maximum Doppler frequencies is given by%
\begin{equation}
AT_{OR}^{iid}=\frac{1}{\pi\mathcal{F}}\text{.} \label{AAT1}%
\end{equation}

\end{corollary}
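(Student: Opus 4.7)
The plan is to specialize the general average relay activation time formula from Corollary~\ref{AATC1} to the i.i.d.\ regime with a common maximum Doppler frequency, invoking the already-established i.i.d.\ switching-rate expression in (\ref{swr5}). First, I would invoke Corollary~\ref{AATC1}, which states
\begin{equation*}
AT_{i,OR}=\frac{2\Omega_{i}}{SR_{OR}\left(\Omega_{1}+\Omega_{2}\right)},\quad i\in\{1,2\}.
\end{equation*}
Under the i.i.d.\ assumption, all four link fading powers coincide, so $\Omega_{SR_{1}}=\Omega_{R_{1}D}=\Omega_{SR_{2}}=\Omega_{R_{2}D}$, and hence by the definition $\Omega_{i}=\Omega_{SR_{i}}\Omega_{R_{i}D}/(\Omega_{SR_{i}}+\Omega_{R_{i}D})$ we get $\Omega_{1}=\Omega_{2}$. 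The prefactor therefore collapses to $2\Omega_{i}/(\Omega_{1}+\Omega_{2})=1$, making the average activation time the same for both relays and equal to the reciprocal of the switching rate.

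Next, I would substitute the switching rate from (\ref{swr5}) of the preceding Corollary, namely $SR_{OR}^{iid}=\pi\mathcal{F}$, which holds precisely when $\mathcal{F}_{SR_{i}}=\mathcal{F}_{R_{i}D}=\mathcal{F}$. Dividing by this quantity yields
\begin{equation*}
AT_{OR}^{iid}=\frac{1}{\pi\mathcal{F}},
\end{equation*}
as claimed. Because the two relays are now statistically indistinguishable, the common activation time drops the relay index, which is consistent with the notation in the statement.

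There is essentially no obstacle here; the result is a one-line specialization of results already proved in the paper. The only subtlety worth checking explicitly is the consistency of Corollary~\ref{AATC1}'s derivation (in which it is used that the steady-state probability $\rho_{i}^{OR}=\Omega_{i}/(\Omega_{1}+\Omega_{2})$) with the i.i.d.\ symmetry, which simply reduces $\rho_{i}^{OR}$ to $1/2$, and intuitively matches the expectation that each relay is selected half the time and hence accounts for half of the total switchings. No new computations are required beyond the substitution itself.
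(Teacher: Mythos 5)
Your proof is correct and follows exactly the paper's route: specializing Corollary \ref{AATC1} (eq. (\ref{ARAT1})) to the i.i.d. case where $\Omega_{1}=\Omega_{2}$ and then substituting the switching rate $SR_{OR}^{iid}=\pi\mathcal{F}$ from (\ref{swr5}). The paper's own proof is just this one-line combination, so your more explicit verification of the collapsing prefactor is a faithful, slightly more detailed version of the same argument.
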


\begin{proof}
The proof follows from (\ref{ARAT1}), in conjunction with (\ref{swr5}).
\end{proof}

\subsection{$L$ Relays, i.i.d. Fading}

Let us now consider the versatile case of $L$ available relaying terminals.
For simplicity of the analysis, it is assumed that all the $S$-$R_{i}$ and
$R_{i}$-$D$ links experience i.i.d. fading, as well as identical maximum
Doppler frequency \footnote{We note that the results can be extended to the
case of i.n.i.d. fading and identical maximum Doppler frequency. However, the
resulting expressions would be too complicated and are out of the scope of
this paper.}.

\begin{theorem}
The relay switching rate of OR with $L$ available relays, i.i.d. Rayleigh
fading and identical maximum Doppler frequency $\mathcal{F}$ in each of the
intermediate channels involved, is given by%
\begin{equation}
SR_{OR}^{iid}=\sqrt{2}L\left(  L-1\right)  \pi\mathcal{F}\sum_{l=0}%
^{L-2}\left(  -1\right)  ^{l}\binom{L-2}{l}\left(  \frac{1}{l+2}\right)
^{\frac{3}{2}} \label{SRL2}%
\end{equation}

\end{theorem}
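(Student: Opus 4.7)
The plan is to generalize the pairwise zero-crossing argument used in Theorem~1 to $L$ branches. Define the virtual end-to-end channel gains $a_i(t)=\min(a_{SR_i}(t),a_{R_iD}(t))$ for $i=1,\dots,L$, so that the active relay at time $t$ is $b(t)=\arg\max_{i}a_i(t)$. A switching event occurs precisely when two of the $a_i$'s cross each other at the current maximum; the event that three or more of the $a_i$'s cross simultaneously has probability zero by the absolute continuity of the underlying Rayleigh processes. Hence, by the i.i.d.\ assumption and the symmetry under exchanging branches, every unordered pair contributes the same rate and
\begin{equation*}
SR_{OR}^{iid}=\binom{L}{2}\,SR_{12},
\end{equation*}
where $SR_{12}$ counts only transitions between $R_1$ and $R_2$.

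To evaluate $SR_{12}$, I would apply the level-crossing machinery of (\ref{swr1})--(\ref{swr2b}) to the difference process $Z_{12}(t)=a_1(t)-a_2(t)$, but restrict to those zero-crossings at which the remaining $L-2$ branches are simultaneously dominated. Exploiting the independence of the branches, this gives
\begin{equation*}
SR_{12}=\int_{0}^{\infty}f_{a_1,a_2}(a,a)\,F_{a}(a)^{L-2}\,E\bigl[\,|\dot a_1-\dot a_2|\,\bigm|\,a_1=a_2=a\,\bigr]\,da,
\end{equation*}
where $F_{a}$ is the common CDF of the $a_i$'s.

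For the Rayleigh setting, conditioning on which of the two underlying links attains the min (exactly as in Appendix~B of the paper) shows that $a_i$ is itself Rayleigh with parameter $\Omega/2$, so $f_{a}(a)=(4a/\Omega)\exp(-2a^2/\Omega)$ and $F_{a}(a)=1-\exp(-2a^2/\Omega)$, and that the conditional law of $\dot a_i$ given $a_i$ is zero-mean Gaussian with variance $\pi^2\Omega\mathcal{F}^2$, independent of $a_i$. Consequently $\dot a_1-\dot a_2$ is zero-mean Gaussian with variance $2\pi^2\Omega\mathcal{F}^2$ and is independent of $(a_1,a_2)$, so the inner conditional expectation evaluates to $2\sqrt{\pi\Omega}\,\mathcal{F}$. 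It then remains to compute
\begin{equation*}
\int_{0}^{\infty}\left(\frac{4a}{\Omega}\right)^{2}e^{-4a^2/\Omega}\bigl(1-e^{-2a^2/\Omega}\bigr)^{L-2}da,
\end{equation*}
which I would expand via the binomial theorem and reduce term-by-term using the standard moment $\int_{0}^{\infty}a^{2}e^{-\alpha a^{2}}da=\sqrt{\pi}/(4\alpha^{3/2})$ at $\alpha=2(l+2)/\Omega$. All factors of $\Omega$ cancel, and after collecting constants one recovers (\ref{SRL2}); a useful sanity check is that for $L=2$ the sum reduces to $(1/2)^{3/2}$, which restores (\ref{swr5}).

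The main obstacle, and the only step that is not a mechanical generalisation of Theorem~1, is justifying the clean decomposition $SR_{OR}^{iid}=\binom{L}{2}SR_{12}$: one must argue that switches almost surely appear as isolated pairwise crossings at the running maximum, and that conditioning on both $a_1$ and $a_2$ being the maximum factorises into the joint density on the diagonal times the survival probability $F_{a}(a)^{L-2}$ of the independent remaining branches. Once this factorisation is established, the rest of the argument is a routine binomial/Gaussian computation closely parallel to Appendices~A and~B.
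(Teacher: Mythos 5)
Your proposal is correct and reproduces (\ref{SRL2}), but it organizes the counting differently from the paper. The paper works per relay: it forms $Z(t)=a_i(t)-a_k(t)$ with $a_k(t)=\max_{j\neq i}a_j(t)$, exploits the independence of $Z$ and $\dot{Z}$ to write the rate as $L\,f_Z(0)\int_0^\infty \dot{z}f_{\dot{Z}}(\dot{z})\,d\dot{z}$ as in (\ref{SRL}) (the factor $L$ together with the one-sided integral compensating for each switch being a zero crossing of two such processes), and therefore needs the density of the maximum of $L-1$ i.i.d.\ Rayleigh variables (\ref{fak}), the resulting $f_Z(0)$ in (\ref{fz0L2}), and a mixture argument for the density of $\dot{a}_k$ leading to (\ref{fzbL}). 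You instead decompose over unordered pairs, $SR_{OR}^{iid}=\binom{L}{2}SR_{12}$, and evaluate $SR_{12}$ with a level-resolved (marked) Rice formula carrying the weight $F_a(a)^{L-2}$ that restricts crossings to the running maximum. Both routes reduce to the same two ingredients, the integral $\int_0^\infty f_a^2(a)F_a^{L-2}(a)\,da$ and the Gaussian mean of $|\dot{a}_1-\dot{a}_2|$, so the computation is essentially parallel; what your version buys is that you never need the time derivative of the maximum process $a_k(t)$ (which the paper treats somewhat heuristically as a probabilistic mixture), at the cost of having to justify the pairwise decomposition --- that almost surely no two pairwise crossings coincide, and that the remaining $L-2$ independent stationary branches evaluated at a crossing time of the pair $(1,2)$ retain their stationary law, yielding the factor $F_a(a)^{L-2}$ --- both of which you correctly flag and which hold under the i.i.d.\ Rayleigh model. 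Your constants check out ($E\left[|\dot{a}_1-\dot{a}_2|\right]=2\sqrt{\pi\Omega}\,\mathcal{F}$ with $\Omega$ the per-link average squared gain, and the binomial reduction giving the $(l+2)^{-3/2}$ terms), and the $L=2$ check recovering (\ref{swr5}) matches the paper.
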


\begin{proof}
Similar to the case of two available end-to-end branches, the switching rate
for the $L$-branch case is evaluated through the use of the process%
\begin{equation}
Z\left(  t\right)  =a_{i}\left(  t\right)  -a_{k}\left(  t\right)
\end{equation}
where $a_{i}\left(  t\right)  $ is the fading process of any of the virtual
end-to-end channels as defined in (\ref{a1}) and%
\begin{equation}
a_{k}\left(  t\right)  =\max_{\substack{j\in\left\{  1,...,L\right\}
\\j\not =i}}a_{j}\left(  t\right)  \text{.}%
\end{equation}
It is important to note that, due to symmetry, the rate at which the system
switches from $R_{i}$ to any other relay is not affected by the index $i$ and
equals half of the overall relay switching rate, since all relays are selected
with identical probability. Consequently, the statistics of $Z\left(
t\right)  $ are the same for each $i\in\left\{  1,...,L\right\}  $, hence the
relay switching rate for this case is obtained as%
\begin{equation}
SR_{OR}^{iid}=Lf_{Z}\left(  0\right)  \int_{0}^{\infty}\overset{\cdot}%
{z}f_{\overset{\cdot}{Z}}\left(  \overset{\cdot}{z}\right)  d\overset{\cdot
}{z}\text{.} \label{SRL}%
\end{equation}
Using trivial integrations and the expressions for $f_{Z}\left(  0\right)  $
and $f_{\overset{\cdot}{Z}}\left(  \overset{\cdot}{z}\right)  $ given in
Appendix C, in (\ref{fz0L2}) and (\ref{fzbL}), respectively, (\ref{SRL})
yields (\ref{SRL2}); the proof is thus complete. As a cross-check, one may
notice that for $L=2$, (\ref{SRL2}) reduces to (\ref{swr5}).
\end{proof}

It is interesting to note from (\ref{SRL2}) that the relay switching rate is
an increasing function of $L$, implying that, as expected, the larger the
number of relays the more frequent relay switchings occurs.

\begin{corollary}
The average relay activation time for the i.i.d. scenario with arbitrary
number of available relays is derived as%
\begin{equation}
AT_{OR}^{iid}=\frac{1}{SR_{OR}^{iid}}. \label{AAT2}%
\end{equation}

\end{corollary}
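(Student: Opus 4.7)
The plan is to extend the renewal-type identity used in the proof of Corollary~\ref{AATC1} from the two-relay setting to $L$ relays by leveraging the full symmetry of the i.i.d.\ model. First I would let $\rho_i^{OR}$ denote the steady-state probability that $R_i$ is the active relay and observe that, because the $L$ virtual end-to-end channels $a_1(t),\ldots,a_L(t)$ are identically distributed and share a common maximum Doppler frequency $\mathcal{F}$, every relay is selected with equal probability, so $\rho_i^{OR}=1/L$ for every $i\in\{1,\ldots,L\}$.

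Next I would argue, again by symmetry, that each relay is equally likely to be the one being deactivated at a generic switching event. Hence the rate at which the system switches \emph{out} of any fixed $R_i$ equals $SR_{OR}^{iid}/L$, since these $L$ partial rates are statistically identical and sum to the total switching rate $SR_{OR}^{iid}$ given in \eqref{SRL2}. The average time $R_i$ remains active is, by the same renewal reasoning used to obtain \eqref{ARAT2}, the ratio of the long-run fraction of time $R_i$ is the selected relay to the rate at which $R_i$ is released:
\begin{equation}
AT_{OR}^{iid}=\frac{\rho_i^{OR}}{SR_{OR}^{iid}/L}=\frac{1/L}{SR_{OR}^{iid}/L}=\frac{1}{SR_{OR}^{iid}},
\end{equation}
which is the claimed identity and, as a consistency check, recovers \eqref{AAT1} when $L=2$ and $\mathcal{F}_{SR_i}=\mathcal{F}_{R_iD}=\mathcal{F}$.

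There is no substantive obstacle: all three ingredients, namely the renewal identity from Corollary~\ref{AATC1}, the uniform steady-state selection probability, and the equal apportionment of the total switching rate among relays, are immediate consequences of i.i.d.\ fading together with the common maximum Doppler frequency assumption. The only minor care needed is to verify that the two symmetry claims can be invoked simultaneously; this follows because the joint stationary law of $(a_1(t),\ldots,a_L(t))$ is invariant under any permutation of indices, which automatically makes both the occupation measure and the exit rates permutation-invariant, so the two-relay derivation carries over verbatim with $2$ replaced by $L$ in the denominator of \eqref{ARAT2}.
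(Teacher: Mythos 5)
Your proposal is correct and follows essentially the same route as the paper: the paper likewise generalizes the renewal identity of Corollary~\ref{AATC1} by writing $AT_{OR}^{iid}=L\rho_{i}^{OR}/SR_{OR}^{iid}$ and invoking symmetry to get $\rho_{i}^{OR}=1/L$, which is exactly your argument with the exit-rate apportionment $SR_{OR}^{iid}/L$ spelled out explicitly.
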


\begin{proof}
Using the same approach as in Corollary \ref{AATC1}, it follows that
\[
AT_{OR}^{iid}=\frac{L\rho_{i}^{OR}}{SR_{OR}^{iid}}=\frac{1}{SR_{OR}^{iid}}%
\]
where we used the fact that, due to symmetry, $\rho_{i}^{OR}=1/L$.
\end{proof}

\section{\label{DSSC}Relay Switching Rates and Average Relay Activation Time
of DSSC-B}

\begin{theorem}
The relay switching rate for the two-relay DSSC-B setup is derived as%
\begin{align}
&  SR_{DSSC}\left(  T\right)  =\frac{\sqrt{2\pi}e^{-\frac{T\left(  \Omega
_{1}+\Omega_{2}\right)  }{\Gamma\Omega_{1}\Omega_{2}}}\left(  e^{\frac
{2T}{\Gamma\Omega_{2}}}-e^{\frac{T}{\Gamma\Omega_{2}}}\right)  \left(
\sqrt{\frac{T}{\Gamma\Omega_{SR_{1}}}}\mathcal{F}_{SR_{1}}+\sqrt{\frac
{T}{\Gamma\Omega_{R_{1}D}}}\mathcal{F}_{R_{1}D}\right)  }{2e^{\frac{T}%
{\Gamma\Omega_{1}}}-e^{\frac{2T}{\Gamma\Omega_{1}}}+2e^{\frac{T}{\Gamma
\Omega_{2}}}-e^{\frac{2T}{\Gamma\Omega_{2}}}-2e^{\frac{T\left(  \Omega
_{1}+\Omega_{2}\right)  }{\Gamma\Omega_{1}\Omega_{2}}}+e^{\frac{T\left(
\Omega_{1}+2\Omega_{2}\right)  }{\Gamma\Omega_{1}\Omega_{2}}}+e^{\frac
{T\left(  2\Omega_{1}+\Omega_{2}\right)  }{\Gamma\Omega_{1}\Omega_{2}}}%
-2}\nonumber\\
&  +\frac{\sqrt{2\pi}e^{-\frac{T\left(  \Omega_{1}+\Omega_{2}\right)  }%
{\Gamma\Omega_{1}\Omega_{2}}}\left(  e^{\frac{2T}{\Gamma\Omega_{1}}}%
-e^{\frac{T}{\Gamma\Omega_{1}}}\right)  \left(  \sqrt{\frac{T}{\Gamma
\Omega_{SR_{2}}}}\mathcal{F}_{SR_{2}}+\sqrt{\frac{T}{\Gamma\Omega_{R_{2}D}}%
}\mathcal{F}_{R_{2}D}\right)  }{2e^{\frac{T}{\Gamma\Omega_{1}}}-e^{\frac
{2T}{\Gamma\Omega_{1}}}+2e^{\frac{T}{\Gamma\Omega_{2}}}-e^{\frac{2T}%
{\Gamma\Omega_{2}}}-2e^{\frac{T\left(  \Omega_{1}+\Omega_{2}\right)  }%
{\Gamma\Omega_{1}\Omega_{2}}}+e^{\frac{T\left(  \Omega_{1}+2\Omega_{2}\right)
}{\Gamma\Omega_{1}\Omega_{2}}}+e^{\frac{T\left(  2\Omega_{1}+\Omega
_{2}\right)  }{\Gamma\Omega_{1}\Omega_{2}}}-2}\label{SRDSSC}\\
&  +\frac{\sqrt{2\pi}\left(  \sqrt{\frac{T}{\Gamma\Omega_{SR_{1}}}}%
\mathcal{F}_{SR_{1}}+\sqrt{\frac{T}{\Gamma\Omega_{R_{1}D}}}\mathcal{F}%
_{R_{1}D}+\sqrt{\frac{T}{\Gamma\Omega_{SR_{2}}}}\mathcal{F}_{SR_{2}}%
+\sqrt{\frac{T}{\Gamma\Omega_{R_{2}D}}}\mathcal{F}_{R_{2}D}\right)  \left(
e^{\frac{T}{\Gamma\Omega_{1}}}-1\right)  \left(  e^{\frac{T}{\Gamma\Omega_{2}%
}}-1\right)  }{2e^{\frac{T}{\Gamma\Omega_{1}}}-e^{\frac{2T}{\Gamma\Omega_{1}}%
}+2e^{\frac{T}{\Gamma\Omega_{2}}}-e^{\frac{2T}{\Gamma\Omega_{2}}}%
-2e^{\frac{T\left(  \Omega_{1}+\Omega_{2}\right)  }{\Gamma\Omega_{1}\Omega
_{2}}}+e^{\frac{T\left(  \Omega_{1}+2\Omega_{2}\right)  }{\Gamma\Omega
_{1}\Omega_{2}}}+e^{\frac{T\left(  2\Omega_{1}+\Omega_{2}\right)  }%
{\Gamma\Omega_{1}\Omega_{2}}}-2}\nonumber
\end{align}%
\[
\]

\end{theorem}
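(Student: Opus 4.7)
My approach would decompose $SR_{DSSC}(T)$ by conditioning on which relay is currently active. A switching event in DSSC-B coincides exactly with a downcrossing of the active branch's virtual channel $a_b(t)=\min(a_{SR_b}(t),a_{R_bD}(t))$ through the level $\sqrt{T/\Gamma}$, so treating the downcrossings of each $a_i$ as stationary events with their unconditional rate (the modelling assumption implicit in the per-transmission description (\ref{b2})) leads to the representation
\[
SR_{DSSC}(T) \;=\; \rho_1^{DSSC}\, N_1^{-}(\sqrt{T/\Gamma}) \;+\; \rho_2^{DSSC}\, N_2^{-}(\sqrt{T/\Gamma}),
\]
where $\rho_i^{DSSC}$ is the steady-state probability that $R_i$ is active and $N_i^{-}(r)$ is the downcrossing rate of $a_i(t)$ at level $r$. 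The proof then reduces to computing these two ingredients and reassembling the result.

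For the crossing rate I would use that $a_i=\min(a_{SR_i},a_{R_iD})$ with independent Rayleigh components: a downcrossing of the minimum at $r$ occurs precisely when one component crosses downward while the other lies above $r$. Combining Rice's formula for each Rayleigh factor with $\Pr(a_{AB}>r)=e^{-r^2/\Omega_{AB}}$ and the identity $1/\Omega_{SR_i}+1/\Omega_{R_iD}=1/\Omega_i$ collapses the product of exponentials, yielding
\[
N_i^{-}(\sqrt{T/\Gamma}) \;=\; \sqrt{2\pi}\, e^{-T/(\Gamma\Omega_i)}\,\Bigl[\sqrt{\tfrac{T}{\Gamma\Omega_{SR_i}}}\,\mathcal{F}_{SR_i} + \sqrt{\tfrac{T}{\Gamma\Omega_{R_iD}}}\,\mathcal{F}_{R_iD}\Bigr],
\]
which already supplies the bracketed Doppler-dependent factors visible in (\ref{SRDSSC}).

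To obtain $\rho_i^{DSSC}$ I would set up a four-state Markov chain on $(R_b,\sigma_b)$, with $\sigma_b\in\{+,-\}$ recording whether the most recent sample $a_b^{j-1}$ lay above or below $\sqrt{T/\Gamma}$. Writing $p_i=e^{-T/(\Gamma\Omega_i)}$ and $\bar p_i=1-p_i$, the rule (\ref{b2}) prescribes the per-transmission transition probabilities directly: for instance, from $(R_1,+)$ the chain moves to $(R_1,+)$ with probability $p_1$, to $(R_2,+)$ with probability $\bar p_1 p_2$, and to $(R_2,-)$ with probability $\bar p_1\bar p_2$, and analogously from the other three states. Solving the four balance equations together with the normalisation $\rho_1^{DSSC}+\rho_2^{DSSC}=1$ should yield
\[
\rho_i^{DSSC} \;=\; \frac{\bigl(e^{T/(\Gamma\Omega_{3-i})}-1\bigr)\bigl(e^{2T/(\Gamma\Omega_i)}-e^{T/(\Gamma\Omega_i)}+1\bigr)}{D(T)},
\]
where $D(T)$, the sum of the two numerators, matches exactly the denominator of (\ref{SRDSSC}).

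The main obstacle will be the final algebraic reassembly into three terms. Using the identity $(e^{2a}-e^a+1)/e^a=(e^a-1)+e^{-a}$, each product $\rho_i^{DSSC}\,N_i^{-}$ splits into a ``pure $B_i$'' piece with coefficient $\sqrt{2\pi}(e^{T/(\Gamma\Omega_{3-i})}-1)\,e^{-T/(\Gamma\Omega_i)}$ and a ``symmetric'' piece with coefficient $\sqrt{2\pi}(e^{T/(\Gamma\Omega_1)}-1)(e^{T/(\Gamma\Omega_2)}-1)$; summing $i=1,2$ pools the two symmetric pieces into the $(B_1+B_2)$ third term of (\ref{SRDSSC}), while the two pure pieces become the first and second terms respectively. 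This rearrangement is mechanical but exponent-heavy, and carefully tracking the exponentials so that each of the three terms reproduces (\ref{SRDSSC}) verbatim will constitute the bulk of the remaining bookkeeping effort.
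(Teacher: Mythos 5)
Your proposal is correct and takes essentially the same route as the paper: the paper likewise writes $SR_{DSSC}(T)=\rho_1^{DSSC}(T)N_1(T)+\rho_2^{DSSC}(T)N_2(T)$, obtains the identical down-crossing rate $N_i(T)=\sqrt{2\pi}\,e^{-T/(\Gamma\Omega_i)}\bigl(\sqrt{T/(\Gamma\Omega_{SR_i})}\,\mathcal{F}_{SR_i}+\sqrt{T/(\Gamma\Omega_{R_iD})}\,\mathcal{F}_{R_iD}\bigr)$ via Rice's formula for the Rayleigh-distributed min-process $a_i$, and derives the same steady-state probabilities $\rho_i^{DSSC}$ from a discrete Markov chain on the activation/threshold states (it imports the six-state SSC-B chain of \cite{J:Alouini_switch}, whose stationary probabilities collapse exactly to your four-state solution). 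The only differences are cosmetic—component-wise crossing bookkeeping in place of the joint PDF of $a_i$ and its derivative, and a self-contained chain in place of the cited one—so your reassembly into the three terms of (\ref{SRDSSC}) goes through as described.
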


\begin{proof}
In the two-relay DSSC-B scheme, a relay switching occurs whenever the
negative-going slope of the output SNR crosses the switching threshold, $T$;
equivalently, a relay switching occurs whenever the process $a_{i}\left(
t\right)  $, $i\in\left\{  1,2\right\}  $, crosses $\sqrt{T/\Gamma}$ in a
negative-going direction; recall that $\Gamma$ denotes the common SNR without
fading. The switching rate for relay $R_{i}$, $i\in\left\{  1,2\right\}  $, is
given by%
\begin{equation}
N_{i}\left(  T\right)  =\int_{-\infty}^{0}\left\vert x\right\vert
f_{a_{i},\overset{\cdot}{a}_{i}}\left(  \sqrt{T/\Gamma},x\right)  dx
\label{NDSSC1}%
\end{equation}
where $f_{a_{i},\overset{\cdot}{a}_{i}}\left(  \cdot,\cdot\right)  $ denotes
the joint PDF of $a\left(  t\right)  $ and $\overset{\cdot}{a}_{i}\left(
t\right)  $. Using (\ref{fa1}), (\ref{fai}) and the fact that $a\left(
t\right)  $ and $\overset{\cdot}{a}_{i}\left(  t\right)  $ are independent
processes, (\ref{NDSSC1}) yields
\begin{equation}
N_{i}\left(  T\right)  =\sqrt{2\pi}\left(  \sqrt{\frac{T}{\Gamma\Omega
_{SR_{i}}}}\mathcal{F}_{SR_{i}}+\sqrt{\frac{T}{\Gamma\Omega_{R_{i}D}}%
}\mathcal{F}_{R_{i}D}\right)  \exp\left(  -\frac{\Omega_{SR_{i}}+\Omega
_{R_{i}D}}{\Omega_{SR_{i}}\Omega_{R_{i}D}}\frac{T}{\Gamma}\right)  \text{.}
\label{NDSSC2}%
\end{equation}

The steady-state probability of activating $R_{i}$ in the DSSC-B scheme is
derived in Appendix D as%
\begin{align}
\rho_{1}^{DSSC}\left(  T\right)   &  =\frac{F_{a_{2}^{2}}\left(
T/\Gamma\right)  \left[  F_{a_{2}^{2}}\left(  T/\Gamma\right)  -1\right]
\left[  1-F_{a_{1}^{2}}\left(  T/\Gamma\right)  +F_{a_{1}^{2}}^{2}\left(
T/\Gamma\right)  \right]  }{F_{a_{1}^{2}}\left(  T/\Gamma\right)  \left[
F_{a_{1}^{2}}\left(  T/\Gamma\right)  -1\right]  \left[  1-2F_{a_{2}^{2}%
}\left(  T/\Gamma\right)  +2F_{a_{2}^{2}}^{2}\left(  T/\Gamma\right)  \right]
+F_{a_{2}^{2}}\left(  T/\Gamma\right)  \left[  F_{a_{2}^{2}}\left(
T/\Gamma\right)  -1\right]  }\label{r1DSSC}\\
\rho_{2}^{DSSC}\left(  T\right)   &  =\frac{F_{a_{1}^{2}}\left(
T/\Gamma\right)  \left[  F_{a_{1}^{2}}\left(  T/\Gamma\right)  -1\right]
\left[  1-F_{a_{2}^{2}}\left(  T/\Gamma\right)  +F_{a_{2}^{2}}^{2}\left(
T/\Gamma\right)  \right]  }{F_{a_{2}^{2}}\left(  T/\Gamma\right)  \left[
F_{a_{2}^{2}}\left(  T/\Gamma\right)  -1\right]  \left[  1-2F_{a_{1}^{2}%
}\left(  T/\Gamma\right)  +2F_{a_{1}^{2}}^{2}\left(  T/\Gamma\right)  \right]
+F_{a_{1}^{2}}\left(  T/\Gamma\right)  \left[  F_{a_{1}^{2}}\left(
T/\Gamma\right)  -1\right]  } \label{r2DSSC}%
\end{align}
where
\begin{equation}
F_{a_{i}^{2}}\left(  x\right)  =1-\exp\left(  -x/\Omega_{i}\right)
\label{Fai2}%
\end{equation}
denotes the CDF of $a_{i}^{2}\left(  t\right)  $, $i\in\left\{  1,2\right\}
$. Therefore, the relay switching rate can be expressed as%
\begin{equation}
SR_{DSSC}\left(  T\right)  =\rho_{1}^{DSSC}\left(  T\right)  N_{1}\left(
T\right)  +\rho_{2}^{DSSC}\left(  T\right)  N_{2}\left(  T\right)  .
\label{SRDSSC01}%
\end{equation}
Substituting (\ref{NDSSC2})-(\ref{Fai2}) into (\ref{SRDSSC01}) and after some
algebraic manipulations, we arrive at (\ref{SRDSSC}).
\end{proof}

\begin{corollary}
The average relay activation time for DSSC-B is given by%
\begin{equation}
AT_{i,DSSC}\left(  T\right)  =\frac{2\rho_{i}^{DSSC}\left(  T\right)
}{SR_{DSSC}\left(  T\right)  } \label{ATDSSC}%
\end{equation}

\end{corollary}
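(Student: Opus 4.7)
The plan is to mirror the renewal/alternation argument already used in Corollary \ref{AATC1} for the OR case, with only minor adjustments to account for the fact that we are now in the DSSC-B regime rather than the OR regime. The key observation is that in a two-relay selective scheme, the active relay alternates strictly between $R_1$ and $R_2$, so every switching event simultaneously ends an activation period of one relay and begins an activation period of the other.

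Concretely, I would first observe that, in the long run, the number of $R_1\to R_2$ switchings per unit time must equal the number of $R_2\to R_1$ switchings per unit time, since between consecutive switchings of the same type there is exactly one switching of the opposite type. Consequently, each of these rates equals $SR_{DSSC}(T)/2$, and this is also the rate at which activation periods of $R_i$ begin (for either $i$).

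Next, I would invoke the steady-state probability $\rho_i^{DSSC}(T)$, derived in Appendix D and recalled in \eqref{r1DSSC}--\eqref{r2DSSC}, which by ergodicity represents the long-run fraction of transmission periods during which $R_i$ is the active relay. Equating the total time $R_i$ is active, per unit time, written two different ways yields
\begin{equation}
\rho_i^{DSSC}(T) \;=\; \frac{SR_{DSSC}(T)}{2}\cdot AT_{i,DSSC}(T),
\end{equation}
since the left-hand side is the time fraction spent with $R_i$ active, and the right-hand side is the rate of $R_i$-activations times the mean duration of each such activation. Solving for $AT_{i,DSSC}(T)$ gives \eqref{ATDSSC}.

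I do not anticipate a genuine obstacle here: the derivation is essentially a bookkeeping argument identical in spirit to the OR proof, and both $SR_{DSSC}(T)$ and $\rho_i^{DSSC}(T)$ have already been established in the preceding theorem and in Appendix D. The only subtle point is ensuring that the alternation property genuinely holds for DSSC-B; this is immediate from the switching rule \eqref{b2}, which specifies that a switching event changes $R_b^j$ from $R_1$ to $R_2$ or vice versa, so consecutive switchings are necessarily of opposite types.
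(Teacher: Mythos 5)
Your proposal is correct and follows essentially the same argument as the paper: the paper's proof likewise reduces to the renewal-type bookkeeping of Corollary \ref{AATC1}, using the fact that $R_1\to R_2$ and $R_2\to R_1$ switchings occur at equal long-run rates, so that $R_i$-activations begin at rate $SR_{DSSC}(T)/2$, and then dividing the steady-state occupancy $\rho_i^{DSSC}(T)$ by this rate. Your explicit identity $\rho_i^{DSSC}(T)=\tfrac{SR_{DSSC}(T)}{2}\,AT_{i,DSSC}(T)$ is exactly the step the paper leaves implicit.
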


\begin{proof}
The proof is identical to that of Corollary \ref{AATC1}. We note that the
factor $2$ in the numerator is present because the system switches to $R_{i}$
from the other available relay at a rate that equals $SR_{DSSC}\left(
T\right)  /2$ since, apparently, the number of times the system switches from
$R_{2}$ to $R_{1}$ equals the number of switches from $R_{1}$ to $R_{2}$, in
the long run.
\end{proof}

\section{\label{NUM}Numerical Examples and Discussion}

\subsection{Implementation Issues}

In classical diversity communication systems (e.g. conventional selection
combining or switch and stay combining, where no relaying takes place), the
switching between the diversity branches causes several problems, such as
\textquotedblleft an internal outage\textquotedblright\ due to the corruption
of the receiver filters and data signal chains, as well as phase estimation
failures \cite{J:Beaulieu_SWR}. In cooperative relaying systems, however,
where signals have to be exchanged between spatially distributed nodes, an
additional important issue has to be addressed; such issue is time
synchronization, which requires the local clocks of the relay nodes to be
synchronized, requiring various degrees of precision
\cite{J:Sivir_Sens_survey}.

It should be pointed out that in classical switched diversity systems, keeping
time synchronization between the transmitter and the receiver after switching
to a different diversity branch is not a serious problem, since the relative
time delays of the different branches are practically identical, because the
link distances are practically identical. On the contrary, in cooperative
relaying systems, especially in those employing mobile relays, the time delay
between the transmitter and the destination may change dramatically as soon as
a relay switching occurs. Time synchronization becomes particularly
challenging in cooperative networks since the network dynamics such as
propagation time or physical channel access time are in general
non-deterministic, because of the relative distances between source, relays,
and destination. In other words, switching to another relay would provoke a
time synchronization readjustment between the transmitting relay and the
destination, as well as the source. Therefore, it is evident that frequent
relay switchings are not desirable as they significantly increase the
implementation complexity of relay selection protocols, rendering them
difficult to implement in fast fading scenarios.

\subsection{Numerical Results}

In this subsection, we illustrate some numerical results, regarding the
relaying switching rate and average relaying activation time of both OR and
DSSC-B. All the results were also confirmed by simulations. As already
mentioned, the relay switching rate and the average relaying activation time
constitute a vital part in the design and implementation of selective
cooperative relaying, due to the complexity issues that each relay switching
entails. Fig. \ref{SR1} depicts the normalized switching rates\ of both OR and
DSSC-B versus the ratio $\mathcal{F}_{SR_{i}}/\mathcal{F}_{R_{i}D}$ for $L=2$
relays, when the fading powers in the $S$-$R$ and $R$-$D$\ links are
unbalanced. Specifically, the average channel gains of the $S$-$R_{1}$ and
$S$-$R_{2}$ links are assumed equal to each other, and the same is true for
the $R_{1}$-$D$ and $R_{2}$-$D$ links; the difference in the corresponding
average SNRs is 10dB, so that $\Gamma\Omega_{SR_{i}}=\Gamma\Omega_{R_{i}D}%
\pm10$dB, $i\in\left\{  1,2\right\}  $. As observed from Fig. \ref{SR1}, the
switching rate of DSSC-B is always smaller than that\ of OR, with the latter
to be under specific conditions even ten times greater than the former. This
verifies the intuition that DSSC-B leads to less frequent relay switchings,
accounting thus for simpler practical implementations. It should be noted that
in Fig. \ref{SR1} the switching threshold, $T$, used is that threshold which
leads to the maximum possible switching rate of DSSC-B, which is determined
through numerical optimization methods. In other words, the DSSC-B curves
depict the worst case in terms of switching rate, demonstrating that even in
this scenario the DSSC-B switching rate is much less than that of OR.

The effect of the number of available relays on the switching rate of OR for
the i.i.d. case with equal maximum Doppler frequencies in all of the links
involved, is plotted in Fig. \ref{SR2}. As expected, the switching rate
continuously increases as the number of the relays increases, yet in a
non-linear manner. The main result extracted from Fig. \ref{SR2} is that in
cases where the number of available relays is large, leaving some of them out
of the selection set may be preferable in practical applications with high
Doppler spread, despite the apparent performance cost.

The normalized average relay activation time of the relays for the same
conditions as in Fig. \ref{SR1} is plotted in Fig. \ref{AAT01}. Again the
average activation time in DSSC-B is the minimum possible, i.e., the parameter
$T$ is properly selected so as to account for the worst case. Fig. \ref{AAT01}
manifests that the average relay activation time of DSSC-B is considerably
larger than that of OR, regardless of the assumptions on the relative $S$-$R$
and $R$-$D$ channel strengths. The effect of the number of relays on the
activation time for OR\ is shown in Fig. \ref{AAT02}. One may observe that the
activation time reduces in a non-linear fashion as the number of relays\ reduces.

Finally, in Fig. \ref{SR3}, the switching rates of OR and DSSC-B\ are plotted
for $L=2$ relays and several values of the switching threshold, $T$, assuming
that the channel gains in the $S$-$R$ and $R$-$D$\ links are unbalanced, as
well as that $\mathcal{F}_{SR_{i}}=2\mathcal{F}_{R_{i}D}$. For each of the
scenarios shown here, the switching rate of DSSC-B\ is considerably lower than
that of OR. The ratio of switching rates is, however, crucially dependent on
the switching threshold of DSSC-B, implying that the difference of the
switching rates of OR and DSSC-B shown in Fig. \ref{SR1} is significantly
expanded for $T$ values different from the worst (in terms of relay switching
rate) case for DSSC-B.

\section{\label{CON}Conclusion}

We conducted a study of selective cooperative relaying in time-varying fading
channels. In particular, we derived the relay switching rate for selective
cooperative relaying, which reflects the concept of how frequently a
cooperative scheme switches from one relay to another. Together with the relay
switching rate, a closed-form expression for the average relay activation time
was also derived, which corresponds to the average time which a relay remains
activated until a switching occurs. Numerical results indicated that for the
case where there are two relays available, selecting the active relay in a max
SNR-based fashion (a.k.a. opportunistic relaying - OR) results in a
considerably higher switching rate than selecting the relay in a DSSC-B
fashion. Therefore, considering the complexity issues associated with frequent
relay switching, DSSC-B leads to a simpler implementation of selective
cooperative relaying than OR, and may be preferable in fast fading scenarios,
despite being inferior in terms of error performance.

\section*{Appendix A}

\section*{Evaluation of $f_{Z}\left(  0\right)  $}

The PDF of $Z\left(  t\right)  $ evaluated at zero reflects the probability
that the absolute difference of the virtual fading gains of the $S$-$R_{1}%
$-$D$ and $S$-$R_{2}$-$D$ channels lies in the infinitesimal interval $\left[
0,dz\right]  $. Therefore, $f_{Z}\left(  0\right)  $ can be expressed as%
\begin{equation}
f_{Z}\left(  0\right)  =\int_{0}^{\infty}f_{a_{1}}\left(  x\right)  f_{a_{2}%
}\left(  x\right)  dx \label{fz01}%
\end{equation}
where $f_{a_{1}}\left(  \cdot\right)  $ and $f_{a_{2}}\left(  \cdot\right)  $
denote the PDFs of the processes $a_{1}\left(  t\right)  $ and $a_{2}\left(
t\right)  $, respectively. Using (\ref{Rayl}) and (\ref{min1}), the CDF of
$a_{i}\left(  t\right)  $, $i\in\left\{  1,2\right\}  $, is expressed as%
\begin{align}
F_{a_{i}}\left(  x\right)   &  =1-\int_{x}^{\infty}f_{SR_{i}}\left(
\omega\right)  d\omega\int_{x}^{\infty}f_{R_{i}D}\left(  \omega\right)
d\omega\nonumber\\
&  =1-\exp\left(  -\frac{\Omega_{SR_{i}}+\Omega_{R_{i}D}}{\Omega_{SR_{i}%
}\Omega_{R_{i}D}}x^{2}\right)  \label{Fa1}%
\end{align}
while the PDF of $a_{i}\left(  t\right)  $ is derived by differentiating
(\ref{Fa1}), yielding%
\begin{equation}
f_{a_{i}}\left(  x\right)  =2x\frac{\Omega_{SR_{i}}+\Omega_{R_{i}D}}%
{\Omega_{SR_{i}}\Omega_{R_{i}D}}\exp\left(  -\frac{\Omega_{SR_{i}}%
+\Omega_{R_{i}D}}{\Omega_{SR_{i}}\Omega_{R_{i}D}}x^{2}\right)  . \label{fa1}%
\end{equation}
Note that the process $a_{i}\left(  t\right)  $ is also Rayleigh distributed;
its average squared value is denoted by $\Omega_{i}$ and is given by%
\begin{equation}
\Omega_{i}=\frac{\Omega_{SR_{i}}\Omega_{R_{i}D}}{\Omega_{SR_{i}}+\Omega
_{R_{i}D}}. \label{Wi}%
\end{equation}
Then, $f_{Z}\left(  0\right)  $ is derived as the integral of the product of
two Rayleigh distributions, yielding%
\begin{equation}
f_{Z}\left(  0\right)  =\frac{4}{\Omega_{1}\Omega_{2}}\int_{0}^{\infty}%
x^{2}\exp\left(  -\frac{\Omega_{1}+\Omega_{2}}{\Omega_{1}\Omega_{2}}%
x^{2}\right)  dx. \label{fz001}%
\end{equation}
Using \cite[eq. (3.321.5)]{B:Gra_Ryz_Book}, (\ref{fz001}) yields (\ref{fz1}).

\section*{Appendix B}

\section*{Derivation of $f_{\overset{\cdot}{Z}}\left(  \cdot\right)  $}

The time derivative of $Z\left(  t\right)  $, $\overset{\cdot}{Z}\left(
t\right)  $, equals the difference of the time derivatives $\overset{\cdot}%
{a}_{1}\left(  t\right)  $ and $\overset{\cdot}{a}_{2}\left(  t\right)  $.
Before proceeding in deriving $f_{\overset{\cdot}{Z}}\left(  \overset{\cdot
}{z}\right)  $, we first obtain the PDF of $\overset{\cdot}{a}_{i}\left(
t\right)  $, $i\in\left\{  1,2\right\}  $, as%
\begin{equation}
f_{\overset{\cdot}{a}_{i}}\left(  x\right)  =\Pr\left\{  a_{SR_{i}}\leq
a_{R_{i}D}\right\}  f_{\overset{\cdot}{a}_{SR_{i}}}\left(  x\right)
+\Pr\left\{  a_{SR_{i}}>a_{R_{i}D}\right\}  f_{\overset{\cdot}{a}_{R_{i}D}%
}\left(  x\right)  \label{fa2}%
\end{equation}
where $f_{\overset{\cdot}{a}_{SR_{i}}}\left(  \cdot\right)  $ and
$f_{\overset{\cdot}{a}_{R_{i}D}}\left(  \cdot\right)  $ denote the PDF of the
time derivatives of $a_{SR_{i}}$ and $a_{R_{i}D}$, respectively. Given that
$a_{SR_{i}}$ and $a_{R_{i}D}$ are Rayleigh distributed, $f_{\overset{\cdot}%
{a}_{SR_{i}}}\left(  \cdot\right)  $ and $f_{\overset{\cdot}{a}_{R_{i}D}%
}\left(  \cdot\right)  $ can be expressed as \cite{J:Rice_sine}%
\begin{equation}
f_{\overset{\cdot}{a}_{SR_{i}}}\left(  x\right)  =\frac{1}{\sqrt{2\pi}%
\overset{\cdot}{\sigma}_{SR_{i}}}\exp\left(  -\frac{x^{2}}{2\overset{\cdot
}{\sigma}_{SR_{i}}^{2}}\right)  \label{fsr1}%
\end{equation}%
\begin{equation}
f_{\overset{\cdot}{a}_{R_{i}D}}\left(  x\right)  =\frac{1}{\sqrt{2\pi}%
\overset{\cdot}{\sigma}_{R_{i}D}}\exp\left(  -\frac{x^{2}}{2\overset{\cdot
}{\sigma}_{R_{i}D}^{2}}\right)  \label{fr1d}%
\end{equation}
i.e., $\overset{\cdot}{a}_{SR_{i}}$ and $\overset{\cdot}{a}_{R_{i}D}$ are
zero-mean Gaussian random variables (RVs) with standard deviations
$\overset{\cdot}{\sigma}_{SR_{i}}=\pi\mathcal{F}_{SR_{i}}\sqrt{\Omega_{SR_{i}%
}}$ and $\overset{\cdot}{\sigma}_{R_{i}D}=\pi\mathcal{F}_{R_{i}D}\sqrt
{\Omega_{R_{i}D}}$, respectively. The steady-state probabilities $\Pr\left\{
a_{SR_{i}}\leq a_{R_{i}D}\right\}  $ and $\Pr\left\{  a_{SR_{i}}>a_{R_{i}%
D}\right\}  $ are given by%
\begin{equation}
\Pr\left\{  a_{SR_{i}}\leq a_{R_{i}D}\right\}  =\int_{0}^{\infty}f_{a_{SR_{i}%
}}\left(  x\right)  \left[  1-F_{a_{R_{i}D}}\left(  x\right)  \right]
dx=\frac{\Omega_{R_{i}D}}{\Omega_{SR_{i}}+\Omega_{R_{i}D}} \label{ss1}%
\end{equation}%
\begin{equation}
\Pr\left\{  a_{SR_{i}}>a_{R_{i}D}\right\}  =\frac{\Omega_{SR_{i}}}%
{\Omega_{SR_{i}}+\Omega_{R_{i}D}}. \label{ss2}%
\end{equation}
The PDF of $\overset{\cdot}{a}_{i}\left(  t\right)  $ is therefore derived as%
\begin{equation}
f_{\overset{\cdot}{a}_{i}}\left(  x\right)  =\frac{1}{\sqrt{2}\pi^{3/2}\left(
\Omega_{SR_{i}}+\Omega_{R_{i}D}\right)  }\left[  \frac{\Omega_{R_{i}D}%
\exp\left(  -\frac{x^{2}}{2\pi^{2}\mathcal{F}_{SR_{i}}^{2}\Omega_{SR_{i}}%
}\right)  }{\sqrt{\Omega_{SR_{i}}}\mathcal{F}_{SR_{i}}}+\frac{\Omega_{SR_{i}%
}\exp\left(  -\frac{x^{2}}{2\pi^{2}\mathcal{F}_{R_{i}D}^{2}\Omega_{R_{i}D}%
}\right)  }{\sqrt{\Omega_{R_{i}D}}\mathcal{F}_{R_{i}D}}\right]  \label{fai}%
\end{equation}
where $i\in\left\{  1,2\right\}  $. Consequently, it follows from (\ref{z1})
that the PDF of the process $\overset{\cdot}{Z}\left(  t\right)  $ is
expressed as \cite[ch. 6]{B:Papoulis_Book}%
\begin{equation}
f_{\overset{\cdot}{Z}}\left(  x\right)  =\int_{-\infty}^{\infty}%
f_{\overset{\cdot}{a}_{1}}\left(  y+x\right)  f_{\overset{\cdot}{a}_{2}%
}\left(  y\right)  dy. \label{fzb1}%
\end{equation}
Substituting (\ref{fai}) in (\ref{fzb1}) and making use of the integral
\cite[eq. (3.323.2)]{B:Gra_Ryz_Book}%
\begin{equation}
\int_{-\infty}^{\infty}\exp\left(  -\frac{\left(  y+x\right)  ^{2}}{\alpha
_{1}}\right)  \exp\left(  -\frac{x^{2}}{\alpha_{2}}\right)  dy=\sqrt{\pi
\frac{\alpha_{1}\alpha_{2}}{\alpha_{1}+\alpha_{2}}}\exp\left(  -\frac{x^{2}%
}{\alpha_{1}+\alpha_{2}}\right)  \label{I1}%
\end{equation}
we derive the PDF of $\overset{\cdot}{Z}\left(  t\right)  $, $f_{\overset
{\cdot}{Z}}\left(  \cdot\right)  $, as%
\begin{align}
f_{\overset{\cdot}{Z}}\left(  x\right)   &  =\frac{1}{\sqrt{2}\pi^{3/2}\left(
\Omega_{SR_{1}}+\Omega_{R_{1}D}\right)  \left(  \Omega_{SR_{2}}+\Omega
_{R_{2}D}\right)  }\left[  \frac{\Omega_{SR_{1}}\Omega_{SR_{2}}\exp\left(
-\frac{x^{2}}{2\pi^{2}\left(  \Omega_{R_{1}D}\mathcal{F}_{R_{1}D}^{2}%
+\Omega_{R_{2}D}\mathcal{F}_{R_{2}D}^{2}\right)  }\right)  }{\sqrt
{\Omega_{R_{1}D}\mathcal{F}_{R_{1}D}^{2}+\Omega_{R_{2}D}\mathcal{F}_{R_{2}%
D}^{2}}}\right. \nonumber\\
&  +\frac{\Omega_{R_{1}D}\Omega_{SR_{2}}\exp\left(  -\frac{x^{2}}{2\pi
^{2}\left(  \Omega_{SR_{1}}\mathcal{F}_{SR_{1}}^{2}+\Omega_{R_{2}D}%
\mathcal{F}_{R_{2}D}^{2}\right)  }\right)  }{\sqrt{\Omega_{SR_{1}}%
\mathcal{F}_{SR_{1}}^{2}+\Omega_{R_{2}D}\mathcal{F}_{R_{2}D}^{2}}}%
+\frac{\Omega_{R_{2}D}\Omega_{SR_{1}}\exp\left(  -\frac{x^{2}}{2\pi^{2}\left(
\Omega_{SR_{2}}\mathcal{F}_{SR_{2}}^{2}+\Omega_{R_{1}D}\mathcal{F}_{R_{1}%
D}^{2}\right)  }\right)  }{\sqrt{\Omega_{SR_{2}}\mathcal{F}_{SR_{2}}%
^{2}+\Omega_{R_{1}D}\mathcal{F}_{R_{1}D}^{2}}}\nonumber\\
&  +\left.  \frac{\Omega_{R_{1}D}\Omega_{R_{2}D}\exp\left(  -\frac{x^{2}}%
{2\pi^{2}\left(  \Omega_{SR_{1}}\mathcal{F}_{SR_{1}}^{2}+\Omega_{SR_{2}%
}\mathcal{F}_{SR_{2}}^{2}\right)  }\right)  }{\sqrt{\Omega_{SR_{1}}%
\mathcal{F}_{SR_{1}}^{2}+\Omega_{SR_{2}}\mathcal{F}_{SR_{2}}^{2}}}\right]  .
\label{fzb2}%
\end{align}

Having an expression for $f_{\overset{\cdot}{Z}}\left(  \cdot\right)  $, the
last term in (\ref{swr2}) is derived using \cite[eq. (3.321.4)]%
{B:Gra_Ryz_Book} as shown in (\ref{I2}).

\section*{Appendix C}

\section*{Relay Switching Rate of OR with $L$ available relays}

\subsubsection*{Evaluation of $f_{Z}\left(  0\right)  $}

Let us first derive the PDF of $a_{k}\left(  t\right)  $ as the PDF of the
maximum of $L-1$ i.i.d. Rayleigh RVs, yielding%
\begin{equation}
f_{a_{k}}\left(  x\right)  =\sum_{l=1}^{L-1}f_{a_{l}}\left(  x\right)
\prod_{\substack{j=1\\j\not =l}}^{L-1}F_{a_{j}}\left(  x\right)  =\left(
L-1\right)  f_{a}\left(  x\right)  \left[  F_{a}\left(  x\right)  \right]
^{L-2} \label{fak}%
\end{equation}
with $f_{a}\left(  x\right)  =\left(  2x/\Omega\right)  \exp\left(
-x^{2}/\Omega\right)  $ and $F_{a}\left(  x\right)  =1-\exp\left(
-x^{2}/\Omega\right)  $, where $\Omega=\Omega_{SR_{1}}/2=\Omega_{R_{1}%
D}/2=...=\Omega_{SR_{L}}/2=\Omega_{R_{L}D}/2$ denotes the average squared
channel gain in each of the virtual end-to-end channels involved. Then,
$f_{Z}\left(  0\right)  $ is derived as%
\begin{equation}
f_{Z}\left(  0\right)  =\int_{0}^{\infty}f_{a}\left(  x\right)  f_{a_{k}%
}\left(  x\right)  dx=\left(  L-1\right)  \int_{0}^{\infty}f_{a}^{2}\left(
x\right)  \left[  F_{a}\left(  x\right)  \right]  ^{L-2}dx\text{.}
\label{fz0L}%
\end{equation}
Using the product expansion%
\begin{equation}
\left[  1-\exp\left(  -\frac{x^{2}}{\Omega}\right)  \right]  ^{L-2}%
=1+\sum_{l=1}^{L-2}\left(  -1\right)  ^{l}\binom{L-2}{l}\exp\left(
-l\frac{x^{2}}{\Omega}\right)
\end{equation}
(\ref{fz0L}) yields%
\begin{equation}
f_{Z}\left(  0\right)  =\frac{\left(  L-1\right)  \sqrt{\pi}}{\Omega^{2}}%
\sum_{l=0}^{L-2}\left(  -1\right)  ^{l}\binom{L-2}{l}\left(  \frac{\Omega
}{l+2}\right)  ^{\frac{3}{2}}\text{.} \label{fz0L2}%
\end{equation}

\subsubsection*{Derivation of $f_{\overset{\cdot}{Z}}\left(  \cdot\right)  $}

Because of the i.i.d. assumption, the PDF of the time-derivative
$\overset{\cdot}{a}_{i}\left(  t\right)  $ is derived from (\ref{fai}) as%
\begin{equation}
f_{\overset{\cdot}{a}_{i}}\left(  x\right)  =\frac{1}{2\pi^{3/2}%
\mathcal{F}\sqrt{\Omega}}\exp\left(  -\frac{x^{2}}{4\pi^{2}\mathcal{F}%
^{2}\Omega}\right)
\end{equation}
that is, $\overset{\cdot}{a}_{i}$ is a zero-mean Gaussian RV with standard
deviation $\overset{\cdot}{\sigma}_{a_{i}}=\pi\mathcal{F}\sqrt{\Omega}$. The
PDF of the time-derivative of $a_{k}\left(  t\right)  $, $\overset{\cdot}%
{a}_{k}\left(  t\right)  $, is derived as%
\begin{equation}
f_{\overset{\cdot}{a}_{k}}\left(  x\right)  =\sum_{j=1}^{L-1}\rho_{j}%
^{OR}f_{\overset{\cdot}{a}_{j}}\left(  x\right)  =f_{\overset{\cdot}{a}_{i}%
}\left(  x\right)
\end{equation}
which implies that, due to symmetry, the PDF of the time-derivative of the
maximum of $L-1$ i.i.d. RVs, equals the PDF of each of the $L-1$ RVs. Using
(\ref{fzb1}), the PDF of $\overset{\cdot}{Z}\left(  t\right)  =\overset{\cdot
}{a}_{i}\left(  t\right)  -\overset{\cdot}{a}_{k}\left(  t\right)  $ is
derived as%
\begin{equation}
f_{\overset{\cdot}{Z}}\left(  x\right)  =\frac{1}{\left(  2\pi\right)
^{3/2}\mathcal{F}\sqrt{\Omega}}\exp\left(  -\frac{x^{2}}{8\pi^{2}%
\mathcal{F}^{2}\Omega}\right)  . \label{fzbL}%
\end{equation}

\section*{Appendix D}

\section*{Steady-State Relay Activation Probabilities of DSSC-B}

Considering that relay switchings in DSSC-B are determined in exactly the same
way as branch switchings in SSC-B, the steady-state relay activation
probabilites for DSSC-B are derived through the Markov states of SSC-B given
in \cite{J:Alouini_switch}. Specifically, the Markov chain of DSSC-B yields
six states, as follows. State 1 corresponds to the case where
\textquotedblleft$R_{1}$ is active and the overal SNR down-crosses
$T$\textquotedblright; state 2 corresponds to \textquotedblleft$R_{1}$ is
active and the overall SNR is below $T$\textquotedblright; state 3 corresponds
to \textquotedblleft$R_{1}$ is active and the overall SNR is greater than
$T$\textquotedblright; states 4, 5, 6 refer to the case where $R_{2}$ is
active, and are defined analogous to 1, 2, 3, respectively. The stationary
probabilities, $\pi_{j}^{DSSC-B},$ $j\in\left\{  1,...,6\right\}  $, of the
above Markov states are taken from \cite[eq. (21)]{J:Alouini_switch},
yielding
\begin{subequations}
\label{stat}%
\begin{align}
\pi_{1}^{DSSC-B}  &  =\frac{\left[  1-q_{1}\right]  q_{1}\left[
1-q_{2}\right]  q_{2}}{\left[  q_{1}+q_{2}\right]  \left[  1+2q_{1}%
q_{2}\right]  -\left[  q_{1}+q_{2}\right]  ^{2}-2q_{1}^{2}q_{2}^{2}}\\
\pi_{2}^{DSSC-B}  &  =\frac{q_{1}^{2}q_{2}\left[  1-q_{2}\right]  }{\left[
q_{1}+q_{2}\right]  \left[  1+2q_{1}q_{2}\right]  -\left[  q_{1}+q_{2}\right]
^{2}-2q_{1}^{2}q_{2}^{2}}\\
\pi_{3}^{DSSC-B}  &  =\frac{\left[  1-q_{1}\right]  ^{2}\left[  1-q_{2}%
\right]  q_{2}}{\left[  q_{1}+q_{2}\right]  \left[  1+2q_{1}q_{2}\right]
-\left[  q_{1}+q_{2}\right]  ^{2}-2q_{1}^{2}q_{2}^{2}}\\
\pi_{4}^{DSSC-B}  &  =\frac{\left[  1-q_{1}\right]  q_{1}\left[
1-q_{2}\right]  q_{2}^{2}}{\left[  q_{1}+q_{2}\right]  \left[  1+2q_{1}%
q_{2}\right]  -\left[  q_{1}+q_{2}\right]  ^{2}-2q_{1}^{2}q_{2}^{2}}\\
\pi_{5}^{DSSC-B}  &  =\frac{q_{2}^{2}q_{1}\left[  1-q_{1}\right]  }{\left[
q_{1}+q_{2}\right]  \left[  1+2q_{1}q_{2}\right]  -\left[  q_{1}+q_{2}\right]
^{2}-2q_{1}^{2}q_{2}^{2}}\\
\pi_{6}^{DSSC-B}  &  =\frac{\left[  1-q_{1}\right]  q_{1}\left[
1-q_{2}\right]  q_{2}}{\left[  q_{1}+q_{2}\right]  \left[  1+2q_{1}%
q_{2}\right]  -\left[  q_{1}+q_{2}\right]  ^{2}-2q_{1}^{2}q_{2}^{2}}%
\end{align}
where $q_{1}=F_{a_{1}^{2}}\left(  T/\Gamma\right)  $; $q_{2}=F_{a_{2}^{2}%
}\left(  T/\Gamma\right)  $. Considering that $R_{1}$ is active in the states
1, 2, 3, while $R_{2}$ in states 4, 5, 6, the steady-state relay activation
probabilities of DSSC-B are derived as $\rho_{1}^{DSSC}=\sum_{j=1}^{3}\pi
_{j}^{DSSC-B}$; $\rho_{2}^{DSSC}=\sum_{j=4}^{6}\pi_{j}^{DSSC-B}$, yielding
(\ref{r1DSSC}) and (\ref{r2DSSC}).\pagebreak

\bibliographystyle{IEEEtran}
\bibliography{acompat,References}

\bigskip\pagebreak%

\begin{figure}
[ptb]
\begin{center}
\includegraphics[
natheight=8.460500in,
natwidth=10.966700in,
height=4.5792in,
width=5.9283in
]%
{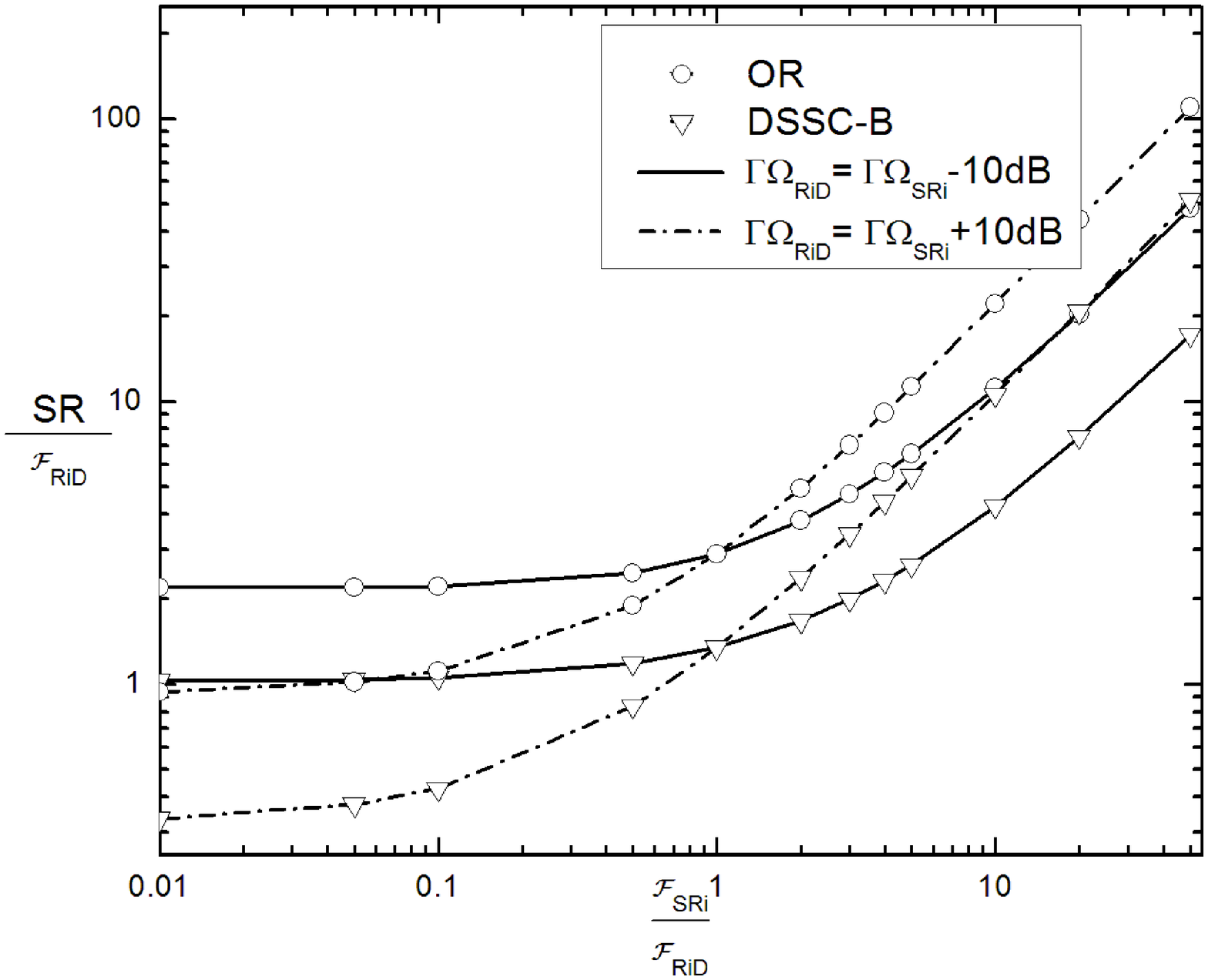}
\caption{Relay switching rates of OR and DSSC}%
\label{SR1}%
\end{center}
\end{figure}
\pagebreak%
\begin{figure}
[ptb]
\begin{center}
\includegraphics[
natheight=8.460500in,
natwidth=10.966700in,
height=4.5792in,
width=5.9283in
]%
{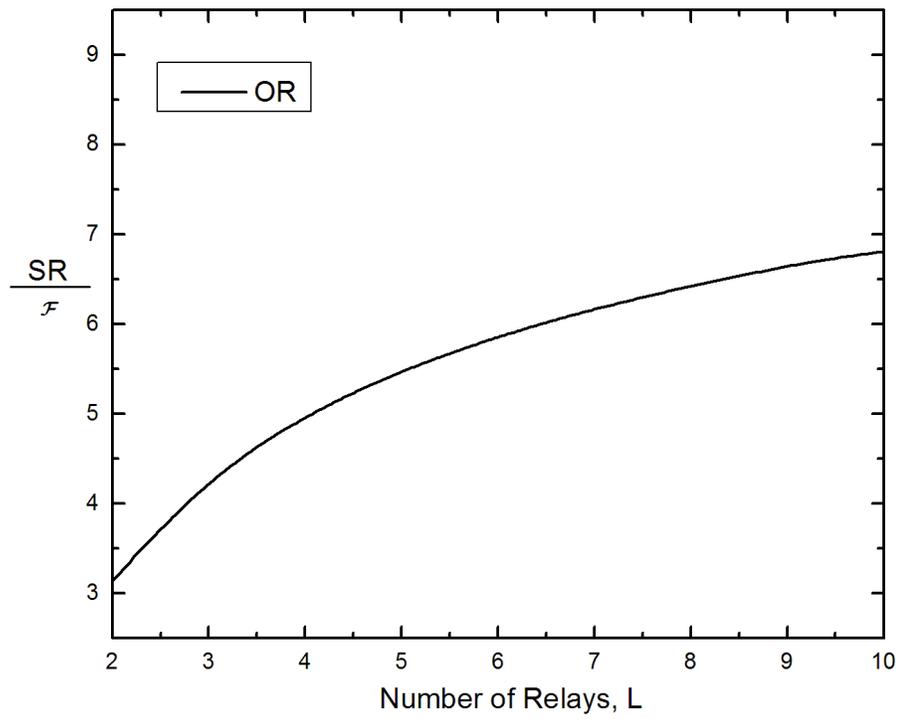}
\caption{Relay switching rate of OR versus the number of available relays}%
\label{SR2}%
\end{center}
\end{figure}
\pagebreak%

\begin{figure}
[ptb]
\begin{center}
\includegraphics[
natheight=8.460500in,
natwidth=10.966700in,
height=4.5792in,
width=5.9283in
]%
{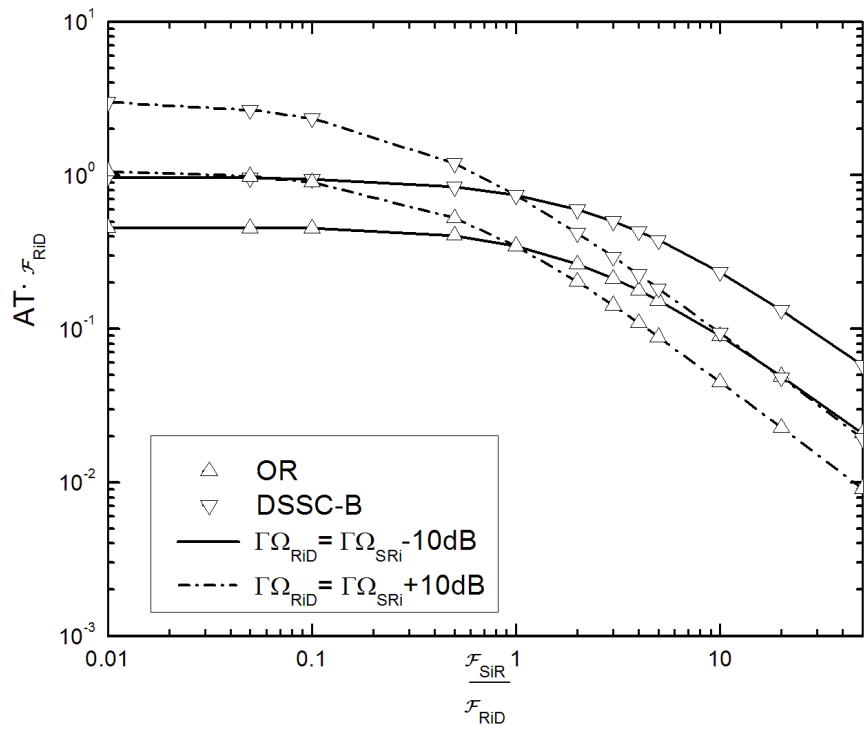}%
\caption{Average activation time of OR and DSSC}%
\label{AAT01}%
\end{center}
\end{figure}
\pagebreak%
\begin{figure}
[ptb]
\begin{center}
\includegraphics[
natheight=8.460500in,
natwidth=10.966700in,
height=4.5792in,
width=5.9283in
]%
{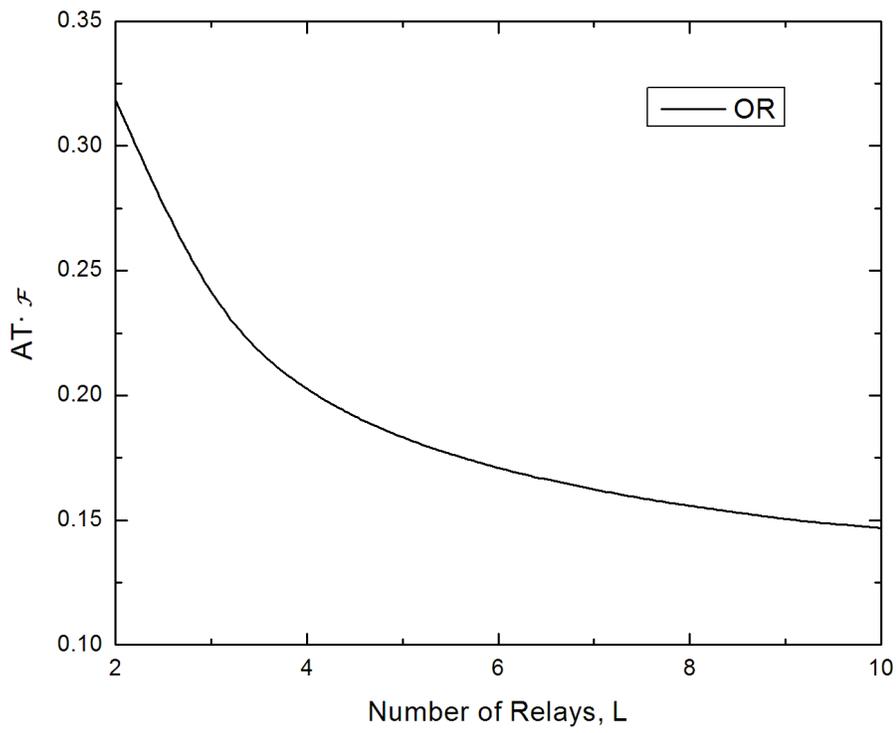}%
\caption{Average activation time of OR versus the number of available relays}%
\label{AAT02}%
\end{center}
\end{figure}
\pagebreak\ \pagebreak%
\begin{figure}
[ptb]
\begin{center}
\includegraphics[
natheight=8.460500in,
natwidth=10.966700in,
height=4.5792in,
width=5.9283in
]%
{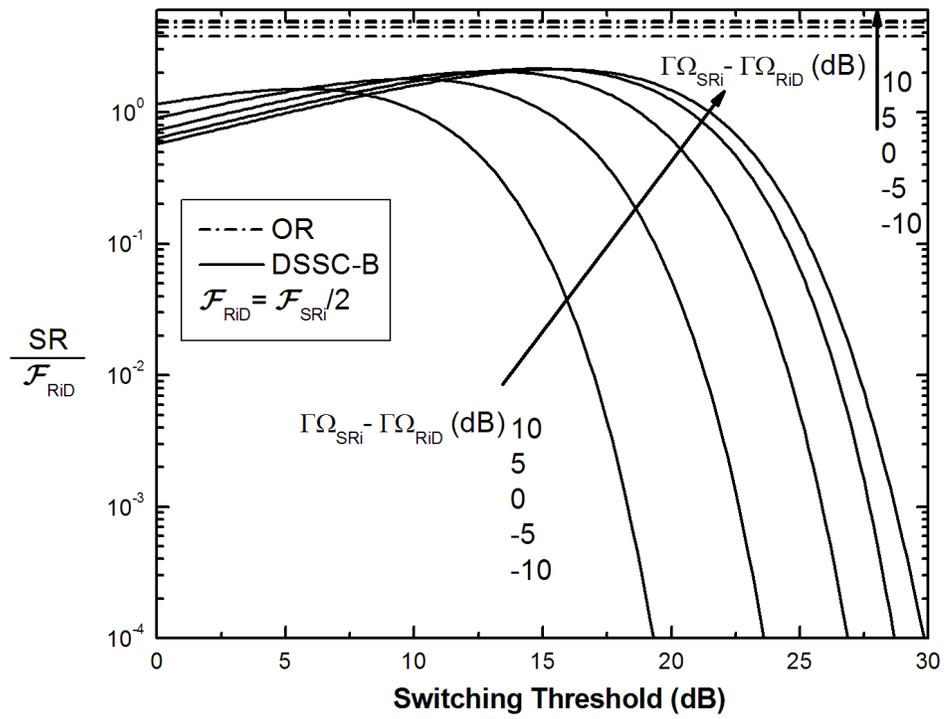}%
\caption{Relay switching rates of OR and DSSC versus the normalized switching
threshold}%
\label{SR3}%
\end{center}
\end{figure}

\end{subequations}
\end{document}